\pdfoutput=1
\documentclass[a4paper,USenglish,cleveref, autoref, thm-restate,authorcolumns]{lipics-v2019}
\nolinenumbers

\bibliographystyle{plainurl}

\title{A Bridge between Polynomial Optimization and Games with Imperfect Recall}

 \author{Hugo Gimbert}    {LaBRI, CNRS, Université de Bordeaux, France} {hugo.gimbert@labri.fr}{}{}
 \author{Soumyajit Paul} {LaBRI,Université de Bordeaux, France} {soumyajit.paul@labri.fr}{}{}
 \author{B. Srivathsan}  {Chennai Mathematical Institute, India} {sri@cmi.ac.in}{}{}

\authorrunning{H.Gimbert, S.Paul and B.Srivathsan}

\ccsdesc[500]{Theory of computation~Representations of games and their complexity}

\keywords{Non-cooperative game theory; extensive form games;
  complexity; Bridge; first order theory of reals; polynomial
  optimization}

\hideLIPIcs 

\usepackage{tikz}
\usetikzlibrary{calc}
\usetikzlibrary{shapes,snakes}
\usepackage[utf8]{inputenc}

\usepackage[textsize=small]{todonotes}

\usepackage{multirow}

\newcommand{\incl}{\subseteq}


\renewcommand{\b}{\beta}
\renewcommand{\d}{\delta}
\renewcommand{\l}{\lambda}

\newcommand{\s}{\sigma}
\renewcommand{\r}{\rho}
\renewcommand{\t}{\tau}

\newcommand{\Cc}{\mathcal{C}}

\newcommand{\Oo}{\mathcal{O}}
\newcommand{\Tt}{\mathcal{T}}
\newcommand{\Uu}{\mathcal{U}}


\newcommand{\NP}{\operatorname{NP}}
\newcommand{\PSPACE}{\operatorname{PSPACE}}
\newcommand{\Ptime}{\operatorname{P}}
\newcommand{\sqsum}{\textsc{Square-Root-Sum}}



\newcommand{\Rat}{\mathbb{Q}}
\newcommand{\Real}{\mathbb{R}}

\newcommand{\xra}[1]{\xrightarrow{#1}}

\newcommand{\Max}{\mathsf{Max}}
\newcommand{\Min}{\mathsf{Min}}
\newcommand{\control}{\mathrm{Control}}
\newcommand{\payoff}{\mathrm{Payoff}}
\newcommand{\chance}{\mathsf{Chance}}
\newcommand{\moves}{\mathrm{Moves}}
\newcommand{\obs}{h}
\newcommand{\seq}{hist}

\newcommand{\pathto}{\operatorname{PathTo}}

\newcommand{\maxbeh}{\operatorname{Max_{beh}}}
\newcommand{\minbeh}{\operatorname{Min_{beh}}}
\newcommand{\maxminbeh}{\operatorname{MaxMin_{beh}}}
\newcommand{\maxpure}{\operatorname{Max_{pure}}}
\newcommand{\minpure}{\operatorname{Min_{pure}}}
\newcommand{\maxminpure}{\operatorname{MaxMin_{pure}}}
\newcommand{\minmaxpure}{\operatorname{MinMax_{pure}}}
\newcommand{\er}{\exists \mathbb{R}}
\newcommand{\fr}{\forall \mathbb{R}}
\newcommand{\efr}{\exists \forall \mathbb{R}}

\newcommand{\bx}{\bar{x}}
\newcommand{\cdeg}{\operatorname{c-deg}}


\begin{document}

\maketitle

\begin{abstract}   We provide several positive and negative complexity results for
  solving games with imperfect recall.  Using a one-to-one
  correspondence between these games on one side and multivariate
  polynomials on the other side, we show that solving games with
  imperfect recall is as hard as solving certain problems of the first
  order theory of reals. We establish square root sum hardness even
  for the specific class of A-loss games. On the positive side, we
  find restrictions on games and strategies motivated by Bridge
  bidding that give polynomial-time complexity.
 
\end{abstract}

\section{Introduction}
The complexity of games of finite duration and imperfect information
is a central problem in Artificial Intelligence.  In the particular
case of zero-sum two-player extensive form games with perfect recall,
the problem was notably shown to be solvable in
polynomial-time~\cite{KollerMegiddo::1992,vonStengel::1996}. The perfect recall
assumption, which states that players do not lose track of any
information they previously received, is mandatory for this
tractability result to hold: without this assumption, the problem was
shown to be NP-hard~\cite{KollerMegiddo::1992,Cermak::2018}.

The primary motivation for our work is to investigate the complexity
of the game of Bridge, a game between two teams of two players each:
North and South against West and East.  Bridge is a specific class of
multi-player games called \emph{team games}, where two teams of
players have opposite interests, players of the same team have the
same payoffs, but players cannot freely communicate, even inside the
same team (see e.g.~\cite{DBLP:conf/aaai/Celli018,NIPS2018_8172} for
more details).  Interestingly, dropping the perfect recall assumption
in zero-sum two player games is enough to encompass team games: the
lack of communication between the players about their private
information can be modeled with imperfect recall.  Another motivation
to study games with imperfect recall is that they may be used to
abstract large perfect recall games and obtain significant
computational improvements
empirically~\cite{Cermak::IJCAI::2017,PracticalUseImperfect}.

\medskip \newcommand{\FOTR}{FOT($\mathbb{R}$)} Our results exhibit
tight relations between the complexity of solving games with imperfect
recall and decision problems in the first-order theory of reals \FOTR.
A formula in \FOTR\ is a logical statement containing Boolean
connectives $\vee, \wedge, \neg$ and quantifiers $\exists, \forall$
over the signature $(0, 1, +, *, \le, <, =)$. We can consider it to be
a first order logic formula in which each atomic term is a polynomial
equation or inequation, for instance
$\exists x_1, x_2 \forall y (0 \le y \le 1) \rightarrow (4x_1y +
5x_2^2y + 3x_1^3x_2 > 4)$ (where we have used integers freely since
they can be eliminated without a significant blow-up in the size of
the formula~\cite{SchaeferNash}, and the implication operator $\to$ with
the usual meaning).  The complexity class $\er$ consists of those
problems which have a polynomial-time reduction to a sentence of the
form $\exists X \Phi(X)$ where $X$ is a tuple of real variables,
$\Phi(X)$ is a quantifier free formula in the theory of
reals. Similarly, the complexity classes $\fr$ and $\efr$ stand for
the problems that reduce to formulae of the form $\forall X \Phi(X)$
and $\exists X \forall Y \Phi(X, Y)$ where $X, Y$ are tuples of
variables. All these complexity classes $\er$, $\fr$ and $\efr$ are
known to be contained in $\PSPACE$~\cite{Canny:1988,Basu:2006}.
Complexity of games with respect to the $\er$ class has been studied
before in strategic form games, particularly for Nash equilibria
decision problems in $3$ player games~\cite{SchaeferNash, Garg:ICALP2015,
  BiloM16}.

Our paper provides several results about the complexity of extensive
form games with imperfect recall.  First, we show a one-to-one
correspondence between games of imperfect recall on one side and
multivariate polynomials on the other side and use it to establish
several results:
\begin{itemize}
\item In one-player games with imperfect recall, deciding whether the
  player has a behavioural strategy with positive payoff is
  $\er$-complete (Theorem~\ref{thm:one-player-absentminded}).  The
  same holds for the question of non-negative payoff.
\item In two-player games with imperfect-recall, the problem is in the
  fragment $\exists \forall \mathbb{R}$ of FOT($\mathbb{R}$) and it is
  both $\er$-hard and $\fr$-hard
  (Theorem~\ref{thm:two-player-absentminded}).  Even in the particular
  case where the players do not have absent-mindedness, this problem
  is $\sqsum$-hard (Theorem~\ref{thm:two-player-no-absentminded}).
\end{itemize}
A corollary is that the case where one of the two players has A-loss
recall and the other has perfect recall is $\sqsum$ hard, a question
which was left open in~\cite{Cermak::2018}.  While the above results
show that imperfect recall games are hard to solve, we also provide a
few tractability results.
\begin{itemize}
\item We capture the subclass of one-player perfect recall games with
  a class of \emph{perfect recall multivariate polynomials}. As a
  by-product we show that computing the optimum of such a polynomial
  can be done in polynomial-time, while it is NP-hard in general
  (Section~\ref{sec:polyn-optim}).  This also provides a heuristic to
  solve imperfect recall games in certain cases, by converting them to
  perfect recall games of the same size.
\item For one-player games where the player is bound to use
  deterministic strategies, the problem becomes polynomial-time when a
  parameter which we call the \emph{change degree} of the game is 
  constant (Theorem~\ref{thm:fixed-chance-degree}).

\item We provide a model for the bidding phase of the Bridge game, and
  exhibit a decision problem which can be solved in time polynomial in
  the size of the description (Lemma~\ref{lemma:nonoverbidding}).
  
\end{itemize}

\section{Games with imperfect information}
\label{sec:preliminaries}

This section introduces games with imperfect information.  These games
are played on finite trees by two players playing against each other
in order to optimize their payoff. The players are in perfect
competition: the game is zero-sum.  Nature can influence the game with
chance moves.  Players observe the game through information sets and
they are only partially informed about the moves of their adversary
and Nature.

\subsubsection*{Playing games on trees.}
For a set $S$, we write $\Delta(S)$ for a probability distribution
over $S$.

A finite \emph{directed tree} $\Tt$ is a tuple $(V, L, r, E)$ where
$V$ is a finite set of non-terminal nodes; $L$ is a non-empty finite
set of terminal nodes (also called \emph{leaves}) which are disjoint
from $V$; node $r \in V \cup L$ is called the \emph{root} and
$E \incl V \times (V \cup L)$ is the \emph{edge} relation. We write
$u \to v$ if $(u, v) \in E$. It is assumed that there is no edge
$u \to r$ incoming to $r$, and there is a unique path
$r \to v_1 \to \cdots \to v$ from the root to every $v \in V \cup
L$. We denote this path as $\pathto(v)$.

We consider games played between two players $\Max$ and $\Min$ along
with a special player $\chance$ to model random moves during the
game. We will denote $\Max$ as Player $1$ and $\Min$ as Player 2. An
extensive form \emph{perfect information game} is given by a tuple
$(\Tt, A, \control, \d, \Uu)$ where: $\Tt$ is a finite directed tree,
$A = A_1 \cup A_2$ is a set of actions for each player with
$A_1 \cap A_2 = \emptyset$, function
$\control: V \mapsto \{1,2\}\cup \{\chance\}$ associates each
non-terminal node to one of the players, $\d$ is a transition function
which we explain below, and $\Uu: T \mapsto \Rat$ associates a
rational number called the utility (or payoff) to each leaf. For
$i \in \{1, 2\}$, let $V_i$ denote the set of nodes controlled by
Player $i$, that is $\{v \in V~|~ \control(v) = i \} $ and let
$V_{\chance}$ denote the nodes controlled by $\chance$. We 
sometimes use the term \emph{control nodes} for nodes in
$V_1 \cup V_2$ and \emph{chance nodes} for nodes in $V_{\chance}$. The
transition function $\d$ associates to each edge $u \to v$ an action
in $A_i$ when $u \in V_i$, and a rational number when
$u \in V_{\chance}$ such that
$\sum_{v \text{ s.t. } u \to v} \d(u \to v) = 1$ (a probability
distribution over the edges of $u$). We assume that from control
nodes $u$, no two outgoing edges are labeled with the same action by
$\d$: that is $\d(u \to v_1) \neq \d(u \to v_2)$ when $v_1 \neq
v_2$. For a control node $u$, we write $\moves(u)$ for
$\{ a \in A_i~|~ a = \d(u \to v) \text{ for some } v \}$. Games $G_1$
and $G_2$ in Figure~\ref{fig:game-examples} without the blue dashed
lines are perfect information games which do not have $\chance$
nodes. Game $G_{-\sqrt{n}}$ of Figure~\ref{fig:sqrt-n} without the
dashed lines gives a perfect information game with $\Max$, $\Min$ and
$\chance$ where nodes of $\Max$, $\Min$ and $\chance$ are circles, squares and
triangles respectively.

An extensive form \emph{imperfect information game} is given by a
perfect information game as defined above along with two partition
functions $\obs_1: V_1 \mapsto \Oo_1$ and $\obs_2: V_2 \mapsto \Oo_2$
which respectively map $V_1$ and $V_2$ to a finite set of
\emph{signals} $\Oo_1$ and $\Oo_2$. The partition functions $\obs_i$
satisfy the following criterion: $\moves(u) = \moves(v)$ whenever
$\obs_i(u) = \obs_i(v)$. Each partition $h_i^{-1}(o)$ for
$o \in \Oo_i$ is called an \emph{information set} of Player
$i$. Intuitively, a player does not know her exact position $u$ in the
game, and instead receives the corresponding signal $\obs_i(u)$
whenever she arrives to $u$. Due to the restriction on moves, we can
define $\moves(o)$ for every $o \in \Oo_i$ to be equal to $\moves(u)$
for some $u \in h_i^{-1}(o)$. In Figure~\ref{fig:game-examples}, the
blue dashed lines denote the partition of $\Max$: in $G_1$, $\{r, u\}$
is one information set and in $G_2$, the information sets of $\Max$
are $\{u_1\}$, $\{u_2\}$ and $\{u_3, u_4\}$. $\Max$ has to play the
same moves at both $r$ and $u$ in $G_1$, and similarly at $u_3$ and
$u_4$ in $G_2$. Based on some structure of these information sets,
imperfect information games are further characterized into different
classes. We explain this next.

\begin{figure}[t]
\begin{center}
\tikzset{
triangle/.style = {regular polygon,regular polygon sides=3,draw,inner sep = 2},
circ/.style = {circle,draw,inner sep = 1.5},
term/.style = {circle,draw,inner sep = 1.5,fill=black},
sq/.style = {rectangle,fill=gray!50, draw, inner sep = 2}
}
\begin{tikzpicture}
\tikzstyle{level 1}=[level distance=7mm,sibling distance = 10mm]
\tikzstyle{level 2}=[level distance=7mm,sibling distance=10mm]
\tikzstyle{level 3}=[level distance=7mm,sibling distance=15mm]
\tikzstyle{level 4}=[level distance=7mm,sibling distance=8mm]


\begin{scope}[->, >=stealth]
\node (0) [circ] {}
child{
  node (1) [circ] {}
  child{
    node (3) [term, label=below:{$0$}] {}
    edge from parent node [left] {\scriptsize $a$}
  }
  child{
    node (4) [term,label=below:{$1$}] {}
    edge from parent node [right] {\scriptsize $b$}
  }
  edge from parent node [left] {\scriptsize $a$}
}
child{
  node (2) [term, label=below:{$0$}] {}
  edge from parent node [right] {\scriptsize $b$}
}
;
\end{scope}

\draw [dashed, thick, blue, in=10,out=-100] (0) to (1);
\node [gray] at (0,0.25) {\scriptsize $r$};
\node [gray] at (-0.75, -0.7) {\scriptsize $u$};
\node [gray] at (-1.25, -1.4) {\scriptsize $l_1$};
\node [gray] at (0.25, -1.4) {\scriptsize $l_2$};
\node [gray] at (0.75, -0.7) {\scriptsize $l_3$};
\end{tikzpicture}
\qquad \qquad
\begin{tikzpicture}
  \tikzstyle{level 1}=[level distance=7mm,sibling distance = 20mm]
\tikzstyle{level 2}=[level distance=7mm,sibling distance=8mm]
\tikzstyle{level 3}=[level distance=7mm,sibling distance=6mm]
\tikzstyle{level 4}=[level distance=7mm,sibling distance=5mm]


\begin{scope}[->, >=stealth]
\node (0) [sq] {}
child {
  node (1) [circ] {}
  child {
    node (3) [term, label=below:{$1$}] {}
    edge from parent node [left] {\scriptsize $a$}
  }
  child {
    node (4) [circ] {}
    child {
      node (7) [term, label=below:{$2$}] {}
      edge from parent node [left] {\scriptsize $a$}
    }
    child {
      node (8) [term, label=below:{$0$}] {}
      edge from parent node [right] {\scriptsize $b$}
      }
    edge from parent node [right] {\scriptsize $b$} 
  }
  edge from parent node [left] {\scriptsize $A$}
}
child {
  node (2) [circ] {}
   child {
     node (5) [circ] {}
     child {
      node (9) [term, label=below:{$0$}] {}
      edge from parent node [left] {\scriptsize $a$}
    }
    child {
      node (10) [term, label=below:{$2$}] {}
      edge from parent node [right] {\scriptsize $b$}
      }
    edge from parent node [left] {\scriptsize $a$}
  }
  child {
    node (6) [term, label=below:{$1$}] {}
    edge from parent node [right] {\scriptsize $b$} 
  }
  edge from parent node [right] {\scriptsize $B$}
}
;
\end{scope}

\draw [dashed, thick, blue, in=150,out=30] (4) to (5);
\node [gray] at (0,0.25) {\scriptsize $r$};
\node [gray] at (-1,-0.5) {\scriptsize $u_1$};
\node [gray] at (1, -0.5) {\scriptsize $u_2$};
\node [gray] at (-0.85, -1.4) {\scriptsize $u_3$};
\node [gray] at (0.85, -1.4) {\scriptsize $u_4$};
\node [gray] at (-1.6, -1.4) {\scriptsize $l_1$};
\node [gray] at (-1.1, -2.1) {\scriptsize $l_2$}; 
\node [gray] at (-0.5, -2.1) {\scriptsize $l_3$};
\node [gray] at (0.5, -2.1) {\scriptsize $l_4$};
\node [gray] at (1.1, -2.1) {\scriptsize $l_5$};
\node [gray] at (1.6, -1.4) {\scriptsize $l_6$};
\end{tikzpicture}
\end{center}
\caption{One player game $G_1$ on the left, and two player game $G_2$
  on the right}
\label{fig:game-examples}
\end{figure}

\subsubsection*{Histories.}
While playing, a player receives a sequence of signals, called the
\emph{history}, defined as follows.  For a vertex $v$ controlled by
player $i$, let

$\seq(v)$ be the sequence

of signals received and actions played by $i$ along $\pathto(v)$, the
path from the root to $v$.
For example in game $G_2$, $\seq(u_3) = \{u_1\}~b~\{u_3, u_4\}$ (for
convenience, we have denoted the signal corresponding to an
information set by the set itself). Note that the information set of a
vertex is the last signal of the sequence, thus if two vertices have
the same sequence, they are in the same information set. On the
other hand, the converse need not be true: two nodes in the same
information set could have different histories, for instance node
$u_4$ in $G_2$ has sequence $\{u_2\}~a~\{u_3, u_4\}$.

In such a case, what happens intuitively is that player $i$ does not
recall that she received the signals $\{u_1\}$ and $\{u_2\}$ and
played the actions $b$ and $a$.  This gives rise to various
definitions of \emph{recalls} for a player in the game.

\subsubsection*{Recalls.}
Player $i$ is said to have \emph{perfect recall} if she never forgets
any signals or actions, that is, for every $u, v \in V_i$, if
$\obs_i(u) = \obs_i(v)$ then $\seq(u) = \seq(v)$: every vertex in an
information set has the same history with respect to $i$. Otherwise
the player has \emph{imperfect recall}.

$\Max$ has imperfect recall in $G_1, G_2$ and $G_{-\sqrt{n}}$ whereas
$\Min$ has perfect recall in all of them (trivially, since there is
only one signal that she receives). Within imperfect recall we make
some distinctions.

Player $i$ is said to have \emph{absent-mindedness} if there are two
nodes $u, v \in V_i$ such that $u$ lies in the unique path from root
to $v$ and $h_i(u) = h_i(v)$ (player $i$ forgets not only her history,
but also the number of actions that she has played). $\Max$ has
absent-mindedness in $G_1$.

Player $i$ has \emph{A-loss} recall if she is not absent-minded, and
for every $u, v \in V_i$ with $h_i(u) = h_i(v)$ either
$\seq(u) = \seq(v)$ or $\seq(u)$ is of the form $\sigma a \sigma_1$
and $\seq(v)$ of the form $\sigma b \sigma_2$ where $\sigma$ is a
sequence ending with a signal and $a, b \in A_i$ with $a \neq b$
(player $i$ remembers the history upto a signal, after which she
forgets the action that she played). $\Max$ has A-loss in
$G_{-\sqrt{n}}$ since she forgets whether she played $a_0$ or $a_1$.
There are still cases where a player is not absent-minded, but not
A-loss recall either, 
for example when there
exists an information set containing $u, v$ whose histories differ
at a signal.
This happens when $i$ receives different signals due to the moves of
the other players (including player Chance), and later converges to
the same information set. In this document, we call such situations as
\emph{signal loss} for Player $i$. $\Max$ has signal loss in $G_2$
since at $\{u_3, u_4\}$ as she loses track between $\{u_1\}$ and
$\{u_2\}$.


\subsubsection*{Plays, strategies and maxmin value.} A \emph{play} is
a sequence of nodes and actions from the root to a leaf: for each leaf
$l$, the $\pathto(l)$ is a play. When the play ends at $l$, $\Min$
pays $\Uu(l)$ to $\Max$. The payoffs $\Uu(l)$ are the numbers below
the leaves in the running examples. $\Max$ wants to maximize the
expected payoff and $\Min$ wants to minimize it. In order to define
the expected payoff, we define the notion of \emph{strategies} for
each player.  A \emph{behavioural strategy} $\b$ for Player $i$ is a
function which maps each signal $o \in \Oo_i$ to $\Delta(\moves(o))$,
a probability distribution over its moves. For $a \in \moves(o)$, we
write $\b(o, a)$ for the value associated by $\b$ to the action $a$ at
information set $o$. For node $u$, we write $\b(u, a)$ for the
probability $\b(h_i(u), a)$.  A \emph{pure strategy} $\r$ is a special
behavioural strategy which maps each signal $o$ to a specific action
in $\moves(o)$.  We will denote the action associated at signal $o$ by
$\r(o)$, and for a node $u$ we will write $\r(u)$ for
$\r(h_i(u))$. For a node $u$ and an action $a$, we define
$\r(u, a) = 1$ if $\r(h_i(u)) = a$, and $\r(u, a) = 0$ otherwise. A
\emph{mixed strategy} is a distribution over pure strategies:
$\l_1 \r_1 + \l_2 \r_2 + \cdots + \l_k \r_k$ where each $\r_j$ is a
pure strategy, $0 \le \l_j \le 1$ and $\Sigma_j \l_j = 1$.

Consider a game $G$. Fixing behavioural strategies $\s$ for $\Max$ and
$\t$ for $\Min$ results in a game $G_{\s, \t}$ without control nodes:
every node behaves like a random node as every edge is labeled with a
real number denoting the probability of playing the edge. For a leaf
$t$, let $\Cc(t)$ denote the product of probabilities along the edges
controlled by $\chance$ in $\pathto(t)$. Let $\s(t)$ denote the
product of $\s(u, a)$ such that $u \in V_1$ and $u \xra{a} v$ is in
$\pathto(t)$. Similarly, let $\t(t)$ denote the product of the other
player's probabilities along $\pathto(t)$.  The payoff with these
strategies $\s$ and $\t$, denoted as $\payoff(G_{\s, \t})$ is then
given by:
$\sum_{t \in T} \Uu(t) \cdot \Cc(t) \cdot \s(t) \cdot \t(t)$.  This is
the ``expected'' amount that $\Min$ pays to $\Max$ when the strategies
are $\s$ and $\t$ for $\Max$ and $\Min$ respectively. We are
interested in computing $\max_{\s} \min_{\t} \payoff(G_{\s, \t})$. We
denote this value as $\maxminbeh(G)$ and call it the maxmin value
(over behavioural strategies). When $G$ is a one player game, the
corresponding values are denoted as $\maxbeh(G)$ or $\minbeh(G)$
depending on whether the single player is $\Max$ or $\Min$. We
correspondingly write $\maxminpure(G)$, $\maxpure(G)$ and
$\minpure(G)$ when we restrict the strategies $\s$ and $\t$ to be
pure. In the one player game $G_1$, $\maxpure(G_1)$ is $0$ since the
leaf $l_2$ is unreachable with pure strategies. Suppose $\Max$ plays
$a$ with probability $x$ and $b$ with $1 - x$, then $\maxbeh(G_1)$ is
given by $\max_{x \in [0, 1]} x (1-x)$. In $G_2$, a pure strategy for
$\Max$ can potentially lead to two leaves with payoffs either $1, 1$
or $1, 2$ or $2, 0$. Based on what $\Max$ chooses, $\Min$ can always
lead to the node with minimum among the two by appropriately choosing
the action at $r$. This gives $\maxminpure(G_2) = 1$. Observe that on
the other hand, $\minmaxpure(G_2) = 2$. It also turns out the
$\maxminbeh(G_2) = 1$, which can be shown by exploiting the symmetry
in the game.

 \section{Imperfect recall games}
\label{sec:imperf-recall-games}

In this section we investigate the complexity of imperfect
recall games and exhibit tight links 
with complexity classes arising out of the
first order theory of reals. Finding the maxmin value involves
computing a maxmin over polynomials where the variables are
partitioned between two players $\Max$ and $\Min$. It turns out that
imperfect recall games can capture polynomial manipulation entirely if
there is a single player. When there are two players, we 
 show that certain existential and
universal problems involving polynomials can be captured using
imperfect recall games.
 Previously, the only known lower bound was NP-hardness \cite{KollerMegiddo::1992}.
 We show that even the very specific case of 
two-player games
without absentmindedness is hard to solve:
 optimal values in such games can be
irrational and solving these games is $\sqsum$-hard.  A summary of
complexity results is given in Table
\ref{tab:complexity-results}.

\begin{figure}
  \begin{small}
  \begin{tabular}{|c||c|c|}
    \hline
    & \textit{No absentmindedness} & \textit{With absentmindedness} \\
    \hline
    \textit{One player} & $\NP$-complete & $\er$-complete ~(Theorem~\ref{thm:one-player-absentminded}) \\
    \hline
    \multirow{3}{*}{\textit{Two players}} & \multicolumn{2}{c|}{in $\efr$~ (Theorem~\ref{thm:two-player-absentminded})}  \\
    \cline{2-3}
    & $\sqsum$-hard & $\er$-hard and $\fr$-hard \\
    & (Theorem~\ref{thm:two-player-no-absentminded}) &
                                                     (Theorem~\ref{thm:two-player-absentminded}) \\
    \hline
  \end{tabular}
\end{small}
\caption{Complexity of imperfect recall games}
\label{tab:complexity-results}
\end{figure}


\subsection{One player}

We start with the hardness of games with a single player. The
important observation is that there is a tight connection between
multi-variate polynomials on one side and one-player games on the
other side.
 

\begin{lemma}
  \label{lem:polynomial-game}
  For every polynomial $F(x_1, \dots, x_n)$ over the reals, there
  exists a one player game $G_F$ with information sets
  $x_1, \dots, x_n$ such that the payoff of a behavioural strategy
  associating $d_i \in [0, 1]$ to $x_i$ is equal to
  $F(d_1, \dots, d_n)$.
\end{lemma}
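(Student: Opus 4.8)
The plan is to read the payoff of a one-player game directly off its tree and then engineer a tree whose leaves reproduce the monomials of $F$. Write $F = \sum_{m=1}^{M} c_m \prod_{i=1}^{n} x_i^{e_{m,i}}$ as a sum of $M$ monomials. In a one-player game whose only control player is $\Max$ (chance nodes being allowed), fixing a behavioural strategy $\s$ turns the payoff into $\payoff = \sum_{t} \Uu(t)\,\Cc(t)\,\s(t)$, where $\Cc(t)$ is the product of chance probabilities and $\s(t)$ the product of the strategy probabilities along $\pathto(t)$. I will use exactly $n$ information sets, one per variable $x_i$, each equipped with two moves: a continue move $a_i$ and a stop move $b_i$. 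A strategy associating $d_i \in [0,1]$ to $x_i$ then plays $a_i$ with probability $d_i$, so that crossing an $a_i$-edge contributes the factor $d_i$ to $\s(t)$, while crossing a $b_i$-edge contributes $1-d_i$.

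For a single monomial $c_m \prod_i x_i^{e_{m,i}}$ of degree $L_m = \sum_i e_{m,i}$ I build a \emph{chain gadget}: a path $w_0 \to w_1 \to \cdots \to w_{L_m}$ of control nodes whose information sets form a sequence in which $x_i$ occurs exactly $e_{m,i}$ times. At each $w_j$, say lying in information set $x_i$, the move $a_i$ continues to $w_{j+1}$ while $b_i$ leads to a fresh leaf of payoff $0$; the terminal leaf $w_{L_m}$ carries a nonzero payoff to be fixed below. Then within this gadget the only leaf with nonzero payoff is $w_{L_m}$, reached with strategy weight $\s(w_{L_m}) = \prod_i d_i^{e_{m,i}}$, since every stop-leaf contributes $0$. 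The key point, and the only delicate one, is that realizing a power $e_{m,i} \ge 2$ forces the information set $x_i$ to be revisited on a single root-to-leaf path, i.e.\ the gadget is \emph{absent-minded}; because the behavioural strategy assigns the same value $d_i$ at every occurrence of $x_i$, the repeated $a_i$-edges multiply to exactly $d_i^{e_{m,i}}$, as required.

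It remains to combine the gadgets additively and to install the coefficients. I make the root a $\chance$ node with $M$ children, sending probability $p_m = 1/M$ into the $m$-th chain gadget, and I set the payoff of its terminal leaf $w_{L_m}$ to $M\,c_m$. The contribution of gadget $m$ is then $p_m \cdot (M c_m) \cdot \prod_i d_i^{e_{m,i}} = c_m \prod_i d_i^{e_{m,i}}$, all other leaves contribute $0$, and summing over $m$ gives $\payoff = \sum_m c_m \prod_i d_i^{e_{m,i}} = F(d_1, \dots, d_n)$. A degree-$0$ monomial is realized by a single leaf directly under the $\chance$ root, and the case $F \equiv 0$ by a single zero leaf. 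Since utilities must be rational, the construction applies verbatim when the coefficients $c_m$ are rational, which is the case relevant to the subsequent reductions; the leaf payoffs $M c_m$ and the chance weights $1/M$ are then rational as required. The main obstacle is thus conceptual rather than computational: recognizing that repeated information sets along a single path encode powers, while a chance root with rescaled utilities encodes the weighted sum of monomials.
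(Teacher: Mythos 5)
Your construction is exactly the paper's: one chain gadget per monomial with repeated (absent-minded) information-set visits encoding powers, stop-edges to zero leaves, a uniform chance root with weight $1/M$, and terminal payoffs rescaled to $M c_m$ so the contributions sum to $F(d_1,\dots,d_n)$. The proposal is correct and follows essentially the same approach as the paper (your remarks on degree-$0$ terms and rationality of coefficients are minor additions the paper leaves implicit).
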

\begin{proof}
  Suppose $F(x_1, \dots, x_n)$ has $k$ terms $\mu_1,...,\mu_k$. For
  each term $\mu_i$ in $F(x_1, \dots, x_n)$ we have a node $s_i$ in $G_F$ whose
  depth is equal to the total degree of $\mu_i$. From $s_i$ there is a
  path to a terminal node $t_i$ containing $d$ nodes for variable $x$,
  for each $x^{d}$ in $\mu_i$. Each of these nodes have two outgoing
  edges of which the edge not going to $t_i$ leads to a terminal node
  with utility $0$. In the terminal node $t_i$ the utility is equal to
  $k c_i$ where $c_i$ is the co-efficient of $\mu_i$.There is a root
  node belonging to $\chance$ which has transitions to each $s_i$ with
  probability $\frac{1}{k}$. All the other nodes belong to the single
  player.  All the nodes assigned due to a variable $x$ belong to one
  information set. The number of nodes is equal to sum of total
  degrees of each term. The payoffs are the same as the
  co-efficients. Hence the size of the game is polynomial in size of
  $F(x_1, \dots, x_n)$. Figure~\ref{fig:polynomial-game} shows an
  example (probability of taking $l$ from information set
  $\{u_1, u_2, u_3\}$ is $x$ and the probability of taking $l$ from
  $\{v_1, v_2, v_3\}$ is $y$).  Clearly the reduction from a
  polynomial to game is not unique.
\end{proof}

\begin{figure}
\begin{center}
\tikzset{
triangle/.style = {regular polygon,regular polygon sides=3,draw,inner sep = 1.5},
circ/.style = {circle,draw,inner sep = 1.5},
term/.style = {circle,draw,inner sep = 1.5,fill=black},
sq/.style = {rectangle,draw,inner sep = 2}
}
\begin{tikzpicture}
\tikzstyle{level 1}=[level distance=10mm,sibling distance=23mm]
\tikzstyle{level 2}=[level distance=10mm,sibling distance=8mm]
\tikzstyle{level 3}=[level distance=10mm,sibling distance=8mm]


\begin{scope}[->, >=stealth]
 \node(0)[triangle]{}
 child{
   node(1)[circ]{}
   child{
     node(6)[circ]{}
     child{
       node(12)[term,label=below:{$12$}]{}
       edge from parent node[left]{$l$}				
     }
     child{
       node(11)[term,label=below:{0}]{} 
       edge from parent node[right]{$r$}
     }
     edge from parent node[left]{$l$}				
   }
   child{
     node(5)[term,label=below:{0}]{} 
     edge from parent node[right]{$r$}
   }
   edge from parent node[above]{$\frac{1}{4}$}  
 }
 child{
   node(2)[circ]{}
   child{
     node(8)[circ]{}
     child{
       node(14)[term,label=below:{$20$}]{}
       edge from parent node[left]{$l$}				
     }
     child{
       node(13)[term,label=below:{0}]{}
       edge from parent node[right]{$r$}
     }
     edge from parent node[left]{$l$}
   }
   child{
     node(7)[term,label=below:{0}]{} 
     edge from parent node[right]{$r$}
   }
   edge from parent node[below]{$\frac{1}{4}$}
 }
 child{
   node(3)[circ]{}
   child{
     node(9)[circ]{}
     child{
       node(15)[term,label=below:{$-32$}]{} 
       edge from parent node[left]{$l$}
     }
     child{
       node(16)[term,label=below:{0}]{}
       edge from parent node[right]{$r$}				
     }
     edge from parent node[left]{$l$} 
   }
   child{
     node(10)[term,label=below:{0}]{}
     edge from parent node[right]{$r$}
   } 
   edge from parent node[below]{$\frac{1}{4}$}
 }
 child{
   node(4)[term,label=below:{$-4$}]{}
   edge from parent node[above]{$\frac{1}{4}$}
 }
 ;
\end{scope}
\draw [dashed,thick,blue,out=-15,in=-165] (1) to (2) 
  [dashed,blue,out=-90,in=30] (1) to (6);
 \draw [dashed,thick,blue,out=30,in=150](8) to (9)
 [dashed,blue,out=-90,in=30](3) to (9);

 \node [gray] at (-3.5, -0.8) {\scriptsize $u_1$};
 \node [gray] at (-4.1, -2) {\scriptsize $u_2$};
 \node [gray] at (-1.2, -0.8) {\scriptsize $u_3$};
 \node [gray] at (-1.8, -2) {\scriptsize $v_1$};
 \node [gray] at (1.1, -2) {\scriptsize $v_2$};
 \node [gray] at (1.2, -0.8) {\scriptsize $v_3$};
\end{tikzpicture}
\end{center}
\caption{One player game for the polynomial $3x^2 + 5xy - 8y^2 - 1$}
\label{fig:polynomial-game}
\end{figure}


The above lemma leads to the hardness of one player games.

\begin{lemma}
  \label{lem:one-player-hardness}
  The following two decision problems are $\er$-hard in one-player
  games with imperfect recall: (i) $\maxbeh \ge 0$ and (ii)
  $\maxbeh > 0$.
\end{lemma}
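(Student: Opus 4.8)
The plan is to reduce from a known $\er$-complete problem about multivariate polynomials and exploit Lemma~\ref{lem:polynomial-game} directly. The canonical $\er$-complete problem is the feasibility of a strict (resp. non-strict) polynomial inequality: given a polynomial $F(x_1,\dots,x_n)$ over the reals, decide whether there exists $(d_1,\dots,d_n) \in \Real^n$ with $F(d_1,\dots,d_n) > 0$ (resp. $\ge 0$). In fact the $\er$-completeness of deciding satisfiability of a single polynomial inequality is standard, but there is one mismatch I must handle: Lemma~\ref{lem:polynomial-game} produces a game where the behavioural strategy assigns to each variable $x_i$ a value $d_i \in [0,1]$, whereas the generic $\er$ problem quantifies over all of $\Real^n$. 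So the first step is to establish $\er$-hardness of polynomial (non-)negativity/positivity where the variables are restricted to the unit box $[0,1]^n$.

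First I would argue that deciding $\exists \bx \in [0,1]^n\ F(\bx) > 0$ (and the $\ge 0$ variant) is itself $\er$-hard. This follows by a routine reduction: given an arbitrary polynomial $G(y_1,\dots,y_n)$ for which we want to decide $\exists \bx \in \Real^n\ G(\bx) > 0$, one substitutes each unbounded real variable $y_i$ by a ratio of bounded variables, for instance $y_i = \frac{u_i - v_i}{w_i}$ with $u_i,v_i,w_i \in [0,1]$, after clearing denominators so that everything remains polynomial; a point in $\Real^n$ exists making $G>0$ iff the transformed polynomial is positive for some choice in the box. Alternatively one uses the standard observation that $\er$-hardness already holds for the existential theory restricted to a bounded domain (this is essentially how containment of $\er$ in $\PSPACE$ via the ball theorem is argued, and several sources state $\er$-hardness of Tarski sentences over a bounded box). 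Either way, the upshot is a polynomial $F$ with the property that satisfiability of $F>0$ over $[0,1]^n$ is $\er$-hard, and likewise for $F \ge 0$.

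Given such an $F$, I apply Lemma~\ref{lem:polynomial-game} to obtain in polynomial time a one-player game $G_F$ whose behavioural strategies correspond exactly to points $(d_1,\dots,d_n) \in [0,1]^n$, with $\payoff(G_{F,\s}) = F(d_1,\dots,d_n)$. Taking the maximum over all behavioural strategies gives
\[
  \maxbeh(G_F) \;=\; \max_{\bx \in [0,1]^n} F(\bx).
\]
Consequently $\maxbeh(G_F) > 0$ iff there exists $\bx \in [0,1]^n$ with $F(\bx) > 0$, and $\maxbeh(G_F) \ge 0$ iff $\max_{\bx} F(\bx) \ge 0$, i.e. iff there exists $\bx$ with $F(\bx) \ge 0$ (using continuity of $F$ and compactness of $[0,1]^n$ so the maximum is attained). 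This reduces the two box-restricted polynomial problems to the two decision problems $\maxbeh > 0$ and $\maxbeh \ge 0$ respectively, establishing $\er$-hardness of both.

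The main obstacle is the domain mismatch in the first step: I must be careful that the reduction restricting variables to $[0,1]$ genuinely preserves $\er$-hardness and does not secretly become easier. The cleanest route is to cite or prove that $\er$-hardness persists on a bounded domain; the rational-substitution trick must be checked to keep the instance size polynomial and to faithfully translate strict versus non-strict satisfiability (clearing denominators $w_i$ must avoid introducing spurious solutions at $w_i = 0$, which one handles by also forcing $w_i$ away from $0$, e.g. via an auxiliary factor, or by a careful change of variables like $y_i = \tan$-style rational maps). Everything after that is immediate from Lemma~\ref{lem:polynomial-game} and basic compactness, so the real content lives entirely in reducing the standard unrestricted $\er$ problem to its box-restricted form.
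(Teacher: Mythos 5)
Your overall architecture --- $\er$-hardness of box-constrained polynomial positivity/nonnegativity combined with Lemma~\ref{lem:polynomial-game} --- matches the paper, and for item (i) your \emph{fallback} route is literally the paper's proof: the paper starts from the $\er$-complete problem of deciding whether a system of quadratic equations has a common root, invokes Lemma 3.9 of \cite{SchaeferRealizability} to confine the root to $[0,1]^n$, and feeds $F = -\sum_i Q_i^2$ into Lemma~\ref{lem:polynomial-game}. Where you genuinely diverge is item (ii). You propose to cite $\er$-hardness of strict-inequality feasibility as a known result; the paper instead \emph{proves} it inside the game model, by reducing (i) to (ii): it uses the doubly-exponential separation bound for disjoint compact semialgebraic sets (Corollary 3.8 of \cite{SchaeferNash}) to argue that $\maxbeh(G)<0$ forces $\maxbeh(G) < -\d$ with $\d = 2^{-2^{L+5}}$, and then builds a gadget game (a repeated-squaring chain $y_{i-1} = y_i^2$) that manufactures this doubly-exponentially small constant with polynomially many nodes. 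Your citation is legitimate --- it is essentially Theorem 4.1 of \cite{SchaeferNash}, which the paper itself says it follows --- but the result you call ``standard'' \emph{is} this separation-bound-plus-repeated-squaring argument; outsourcing it trades self-containedness for brevity, and calling it routine hides the entire mathematical content of part (ii).

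The genuine gap is in your primary, self-contained mechanism, the substitution $y_i = (u_i - v_i)/w_i$. For the \emph{strict} case it can be made to work: after clearing denominators, multiply additionally by $\prod_i w_i$; a product with a zero factor cannot be strictly positive, so spurious solutions at $w_i = 0$ disappear, while every genuine real solution is representable with all $w_i \in (0,1]$ (take $w_i = \min(1, 1/|y_i|)$). But for the \emph{non-strict} case --- which is what item (i) needs --- both of your proposed repairs fail. The auxiliary factor backfires: at $w_i = 0$ the transformed polynomial evaluates to $0 \ge 0$, so every instance becomes a yes-instance. And ``forcing $w_i$ away from $0$'' cannot be done with a polynomial-size threshold: genuine solutions of the original system may require $|y_i|$, hence $1/w_i$, doubly exponential in the input size (the ball-theorem phenomenon), so the threshold would have to be a doubly-exponentially small rational, which can only be encoded by exactly the repeated-squaring machinery you were trying to avoid. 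Consequently, for (i) you must fall back on citing bounded-domain hardness (Lemma 3.9 of \cite{SchaeferRealizability}), at which point your proof coincides with the paper's; and for (ii) your substitution trick is sound only as a bridge on top of the cited, nontrivial hardness of strict feasibility over $\Real^n$ --- or else you must reproduce the paper's explicit separation-bound gadget. The rational-substitution idea alone carries neither item.
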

\begin{proof}
  \textit{(i)} The problem of checking if there exists a common root
  in $\Real^n$ for a system of quadratic equations $Q_i(X)$ is
  $\er$-complete~\cite{SchaeferNash}. This can be reduced to checking
  for a common root in $[0,1]^n$ using Lemma 3.9 of
  \cite{SchaeferRealizability}. We then reduce this problem to
  $\maxbeh \ge 0$. Note that $X$ is a solution to the system iff
  $- \sum_i Q_i(X)^2 \ge 0$. Using Lemma \ref{lem:polynomial-game} we
  construct a game $G_F$ with $F = - \sum_i Q_i(X)^2$. It then follows
  that the system has a common root iff $\maxbeh \ge 0$ in $G_F$.

  \textit{(ii)} We reduce $\maxbeh(G) \ge 0$ to $\maxbeh(G') > 0$ for
  some constructed game $G'$. Suppose that when $\maxbeh(G) < 0$, we can
  show $\maxbeh(G) < -\d$ for a constant $\d > 0$ that can be
  determined from $G$. With this claim, we have $\maxbeh(G) \ge 0$ iff
  $\maxbeh(G) + \d > 0$. We will then in polytime construct a game
  $G'$ whose optimal payoff is $\maxbeh(G) + \d$, which then proves
  the lemma. We will first prove the claim. The proof proceeds along
  the same lines as Theorem 4.1 in \cite{SchaeferNash}.

  Let $g(X)$ be the polynomial expressing the expected payoff in the
  game $G$ when the behavioural strategy is given by the variables
  $X$. Define two sets
  $S_1 := \{(z, X) \mid z = g(X), X \in [0,1]^n \}$ and
  $S_2 := \{ (0, X) \mid X \in [0,1]^n \}$. If $\maxbeh(G) < 0$, then
  $S_1$ and $S_2$ do not intersect. Since both $S_1, S_2$ are compact,
  this means there is a positive distance between them. Moreover,
  $S_1$ and $S_2$ are semi-algebraic sets (those that can expressed by
  a boolean quantifier free formula of the first order theory of
  reals). Corollary 3.8 of \cite{SchaeferNash} gives that this
  distance $ > 2^{-2^{L+5}}$ where $L$ is the complexity of the
  formulae expressing $S_1$ and $S_2$, which in our case is
  proportional to the size of the game. However, since $\d$ is doubly
  exponential, we cannot simply use it as a payoff to get
  $\maxbeh(G) + \d$.

  Define new variables $y_0, y_1, \dots, y_t$ for $t = L + 5$ and
  polynomials $F_i(y_0, \dots, y_t) := y_{i-1} - y^2_i$ for
  $i \in \{1, \dots, t-1\}$ and
  $F_t(y_0, \dots, y_t) := y_t - \frac{1}{2}$. The only common root of
  this system of polynomials $F_i$ gives $y_0 = 2^{-2^{t}} = \d$. Let
  $P := - \sum_i F_i^2(y_0, \dots, y_t)$ and let $G_P$ be the
  corresponding game as in Lemma~\ref{lem:polynomial-game}. Construct
  a new game $G'$ as follows. Its root node is a $\chance$ node with
  edges to three children each with probability $\frac{1}{3}$. To the
  first child, we attach the game $G$, and to the second child, the
  game $G_P$. The third child is node which is controlled by $\Max$
  and belongs to the information set for variable $y_0$. It has two
  leaves as children, the left with payoff $0$ and the right with
  payoff $1$. Observe that the optimal payoff for $\max$ in $G'$ is
  $\frac{1}{3}(\maxbeh(G) + \d)$. From the discussion in the first
  paragraph of this proof, we have $\maxbeh(G) \ge 0$ iff
  $\maxbeh(G') > 0$.
\end{proof}

The previous lemma shows that the game problem is $\er$-hard. Inclusion in
$\er$ is straightforward since the payoff is given by a polynomial
over variables representing the value of a behavioural strategy at
each information set. For example, for the game $G_1$ of
Figure~\ref{fig:game-examples}, deciding $\maxbeh(G_1) \ge 0$ is
equivalent to checking
$\exists x ( 0 \le x \le 1 \land x(1-x) \ge 0)$. We thus get the
following theorem.

\begin{theorem}
  \label{thm:one-player-absentminded}
  For one player games with imperfect recall, deciding $\maxbeh \ge 0$
  is $\er$-complete. Deciding $\maxbeh > 0$ is also $\er$-complete.
\end{theorem}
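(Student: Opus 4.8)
The plan is to assemble Theorem~\ref{thm:one-player-absentminded} directly from the preceding lemmas, so the work is mostly bookkeeping of hardness and membership rather than a fresh construction. For the lower bound, I would simply invoke Lemma~\ref{lem:one-player-hardness}, which already establishes that both decision problems $\maxbeh \ge 0$ and $\maxbeh > 0$ are $\er$-hard in one-player games with imperfect recall. That lemma does all the heavy lifting via the polynomial-to-game reduction of Lemma~\ref{lem:polynomial-game} and the $\er$-completeness of the quadratic feasibility problem; nothing further is needed on the hardness side.

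For the matching upper bound, I would argue membership in $\er$. The key observation, already hinted at in the paragraph following Lemma~\ref{lem:one-player-hardness}, is that fixing a behavioural strategy amounts to choosing one real value $d_o \in [0,1]$ per information set $o$, and the resulting payoff $\payoff(G_{\s})$ is a polynomial $g(X)$ in these variables whose size is polynomial in the size of $G$ (its monomials track the products $\Cc(t) \cdot \s(t)$ along root-to-leaf paths, with utilities as coefficients). Consequently, deciding $\maxbeh(G) \ge 0$ is equivalent to the existential sentence
\begin{equation*}
\exists X \ \Bigl( \bigwedge_{o} (0 \le x_o \le 1) \ \wedge\ g(X) \ge 0 \Bigr),
\end{equation*}
and deciding $\maxbeh(G) > 0$ to the analogous sentence with the final atom replaced by $g(X) > 0$. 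Both are of the form $\exists X\, \Phi(X)$ with $\Phi$ quantifier-free, and both can be produced in polynomial time from $G$, which is exactly the definition of membership in $\er$.

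Combining the two directions gives $\er$-completeness for each of the two problems, proving the theorem. The only point requiring a little care — and the closest thing to an obstacle — is justifying that the payoff polynomial $g(X)$ can be written down explicitly in polynomial time: one must check that its representation (say, as an explicit list of monomials, one per leaf) does not blow up, which follows because each leaf contributes a single product term over the at most-depth-many edges on its path, and the number of leaves is bounded by the size of $G$. Given the way behavioural strategies and $\payoff(G_{\s,\t})$ were defined in Section~\ref{sec:preliminaries}, this is immediate, so the membership argument is genuinely routine and the theorem follows at once.
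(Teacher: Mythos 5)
Your proposal is correct and matches the paper's own argument: the paper likewise obtains both hardness claims directly from Lemma~\ref{lem:one-player-hardness} and settles membership in $\er$ by observing that the payoff is a polynomial $g(X)$ in the behavioural-strategy variables (one monomial per leaf, written down in polynomial time), so that $\maxbeh \ge 0$ (resp.\ $> 0$) becomes the existential sentence $\exists X\, (\bigwedge_o 0 \le x_o \le 1 \wedge g(X) \ge 0)$ (resp.\ with $g(X) > 0$). Your extra care about the explicit size of the monomial representation is a point the paper treats as ``straightforward,'' but the reasoning is identical.
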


\subsection{Two players }

We now consider the case with two players. Analogous to the one player
situation, now $\maxminbeh(G) \ge 0$ can be expressed as a formula in
$\efr$. For instance, consider the game $G_2$ of
Figure~\ref{fig:game-examples}. Let $x, y, z, w$ be the
probability of taking the left action in $u_1,u_2, \{u_3, u_4\}$ and $r$
respectively. Deciding $\maxminbeh(G_2) \ge 0$ is equivalent to the
formula
$\exists x, y, z \forall w (0 \le w \le 1 \to (wx + 2w(1-x)z +
2(1-w)y(1-z) + (1-w)(1-y) \ge 0))$. This gives the upper bound on the
complexity as $\efr$. Hardness is established below.

\begin{theorem}
  \label{thm:two-player-absentminded}
  Deciding $\maxminbeh(G) \ge 0$ is in $\exists \forall
  \mathbb{R}$. It is both $\er$-hard and $\fr$-hard.
\end{theorem}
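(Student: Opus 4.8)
The plan is to prove the three claims of Theorem~\ref{thm:two-player-absentminded} separately: membership in $\efr$, then $\er$-hardness, then $\fr$-hardness.

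\paragraph*{Membership in $\efr$.}
First I would observe that the payoff $\payoff(G_{\s,\t})$ is, by its very definition as $\sum_{t \in T} \Uu(t) \cdot \Cc(t) \cdot \s(t) \cdot \t(t)$, a polynomial $g(X, Y)$ whose variables $X$ encode the behavioural strategy of $\Max$ at each of her information sets and $Y$ encodes the behavioural strategy of $\Min$ at each of hers. The quantities $\s(t)$ and $\t(t)$ are products of these variables along the path to $t$, and the chance factors $\Cc(t)$ are rational constants, so $g$ has size polynomial in the size of $G$ and rational coefficients. The statement $\maxminbeh(G) \ge 0$ asserts the existence of a strategy $X$ for $\Max$ such that for every strategy $Y$ of $\Min$ the payoff is nonnegative. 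The key point is that a strategy is a tuple of probability distributions, so I must also assert that $X$ and $Y$ are valid: each variable lies in $[0,1]$ and, at each information set, the values of the actions sum to $1$. These constraints are linear (hence quantifier-free polynomial) conditions, call them $\mathrm{Strat}_{\Max}(X)$ and $\mathrm{Strat}_{\Min}(Y)$. Then $\maxminbeh(G) \ge 0$ is equivalent to
\[
\exists X\, \forall Y\ \bigl(\mathrm{Strat}_{\Max}(X) \wedge (\mathrm{Strat}_{\Min}(Y) \to g(X,Y) \ge 0)\bigr),
\]
which is exactly a sentence of the required $\exists\forall$ shape over the reals, placing the problem in $\efr$.

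\paragraph*{$\er$-hardness.}
For $\er$-hardness I would reduce from the one-player problem already shown $\er$-hard in Theorem~\ref{thm:one-player-absentminded}. Given a one-player $\Max$ game $G$, I view it as a two-player game by adding a trivial $\Min$ whose choices do not affect anything: formally, $\Min$ has a single information set at which all actions lead to identical subgames (or $\Min$ simply has no genuine control nodes at all, as in the running examples where $\Min$ trivially has perfect recall). Then $\min_{\t}\payoff(G_{\s,\t}) = \payoff(G_{\s})$ for every $\Max$ strategy $\s$, so $\maxminbeh(G) = \maxbeh(G)$, and deciding $\maxminbeh(G)\ge 0$ is exactly deciding $\maxbeh(G)\ge 0$, which is $\er$-hard. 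This direction is essentially bookkeeping.

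\paragraph*{$\fr$-hardness and the main obstacle.}
The genuinely new content is $\fr$-hardness, and I expect this to be the hard part because $\fr$ is (conjecturally) not contained in $\er$, so hardness cannot come from a one-player reduction. The idea is to exploit the minimax structure: a universally-quantified statement becomes a statement about $\Min$, who plays second and against whom $\Max$ must succeed. I would reduce from a canonical $\fr$-complete problem, namely deciding whether a polynomial sentence $\forall Y\,\Psi(Y)$ holds, where $\Psi$ is built from polynomial (in)equalities; by the same normalisation used earlier one may assume $Y \in [0,1]^m$ and that $\Psi(Y)$ has the form $P(Y) \ge 0$ for a single polynomial $P$ with rational coefficients. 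The plan is to build a two-player game in which $\Min$ controls exactly the variables $Y$ through her information sets, $\Max$ has only a trivial choice, and the payoff polynomial realises $P(Y)$. Here I would invoke the one-player construction of Lemma~\ref{lem:polynomial-game} but with the roles read off for the $\Min$ player: replacing the single player there by $\Min$ yields a game whose payoff, as a function of $\Min$'s behavioural strategy $Y$, equals $P(Y)$. Then $\maxminbeh(G) = \min_{Y}\payoff(G_{Y}) = \min_{Y} P(Y)$, and $\min_Y P(Y) \ge 0$ holds if and only if $\forall Y\,(Y\in[0,1]^m \to P(Y)\ge 0)$, i.e.\ exactly the $\fr$ instance. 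The obstacle to watch is that $\maxminbeh$ ranges over behavioural strategies, so I must ensure the universal quantification over $\Min$'s behavioural variables genuinely captures the universal quantifier over $Y\in[0,1]^m$; this is precisely guaranteed because Lemma~\ref{lem:polynomial-game} gives a one-to-one correspondence between information-set probabilities in $[0,1]$ and the polynomial variables, and because $\Max$'s triviality forces the outer $\max$ to be vacuous, leaving a pure minimisation over $\Min$'s strategies.
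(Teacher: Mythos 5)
Your membership argument and your $\er$-hardness argument are correct and coincide with the paper's: membership via the explicit $\exists X \forall Y$ sentence over the strategy polytopes, and $\er$-hardness by viewing a one-player $\Max$ game as a two-player game with a trivial $\Min$, so that $\maxminbeh(G) = \maxbeh(G)$.

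The gap is in the $\fr$-hardness part, at the sentence ``by the same normalisation used earlier one may assume \dots $\Psi(Y)$ has the form $P(Y) \ge 0$''. The routine normalisation of $\er$ (common roots of quadratic systems in the box, then a sum-of-squares collapse) yields the $\er$-complete problem $\exists Y \in [0,1]^m \, (F(Y) \ge 0)$; dualising this gives that the \emph{strict} universal problem $\forall Y \in [0,1]^m \, (-F(Y) > 0)$ is $\fr$-complete, not the non-strict one you need. The strict form is useless for your reduction: since $\maxminbeh(G) = \min_{Y} P(Y)$, a no-instance with $\min_Y P(Y) = 0$ still satisfies $\maxminbeh(G) \ge 0$, so the reduction fails exactly on the boundary. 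The non-strict universal problem you invoke is in fact $\fr$-complete, but proving that is equivalent to proving $\er$-completeness of \emph{strict} feasibility $\exists Y \in [0,1]^m \, (Q(Y) > 0)$, which is not a normalisation: it is the substantive content of Lemma~\ref{lem:one-player-hardness}(ii) (equivalently, the $\maxbeh > 0$ half of Theorem~\ref{thm:one-player-absentminded}), proved in the paper via a separation argument with a doubly-exponentially small gap $\delta$. The paper's proof of $\fr$-hardness routes around this by complementing the game problem directly: since deciding $\maxbeh(G) > 0$ is $\er$-hard, its complement $\maxbeh(G) \le 0$ is $\fr$-hard; negating all payoffs and regarding the single player as $\Min$ turns this into deciding $\minbeh(G') \ge 0$, which is precisely $\maxminbeh(G') \ge 0$ for a one-player $\Min$ game. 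Your argument becomes correct, and essentially identical to the paper's, if you replace the normalisation claim by this complementation step, i.e., by citing the $> 0$ part of Theorem~\ref{thm:one-player-absentminded} (which you already use for $\er$-hardness) rather than a canonical-form assumption.
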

\begin{proof}
  Inclusion in $\efr$ follows from the discussion above.  For the
  hardness, we make use of Lemma~\ref{lem:one-player-hardness}. Note
  that when there is a single player $\Max$, $\maxbeh(G) \ge 0$ is the
  same as $\maxminbeh(G)$ $\ge 0$. As the former is $\er$-hard, we get
  the latter to be $\er$-hard. Now we consider the
  $\fr$-hardness. Since $\maxbeh(G) > 0$ is also $\er$-hard, the
  complement problem $\maxbeh(G) \le 0$ is $\fr$-hard. Hence the
  symmetric problem $\minbeh(G) \ge 0$ is $\fr$-hard. This is
  $\maxminbeh(G) \ge 0$ when there is a single player $\Min$, whence
  $\maxminbeh(G) \ge 0$ is $\fr$-hard.
\end{proof}

In these hardness results, we crucially use the squaring
operation. Hence the resulting games need to have
absentmindedness. Games without absentmindedness result in multilinear
polynomials. The hardness here comes due to irrational
numbers. Examples were already known where maxmin behavioural
strategies required irrational numbers \cite{KollerMegiddo::1992} but
the maxmin payoffs were still rational. We generate a class of games
where the maxmin payoffs are irrational as well. The next lemma lays
the foundation for Theorem~\ref{thm:two-player-no-absentminded}
showing square root sum hardness for this problem. The $\sqsum$
problem is to decide if $\sum_{i=1}^{m} \sqrt{a_i} \le p$ for given
positive integers $a_1, \dots, a_m, p$. This problem was first
proposed in \cite{Garey:1976}, whose complexity was left as an open
problem. The notion of $\sqsum$-hardness was put forward in
\cite{Etessami:2005} and has also been studied with respect to
complexity of minmax computation \cite{Hansen:2010} and game
equilibrium computations \cite{DBLP:journals/siamcomp/EtessamiY10}. In
\cite{Etessami:2005, Hansen:2010} the version discussed was to decide
if $\sum_{i=1}^{m} \sqrt{a_i} \ge p$. But our version is
computationally same since the equality version is decidable in
$\Ptime$ \cite{Blomer:1991}. The $\sqsum$ problem is not known to be
in $\NP$. It is known to lie in the Counting Hierarchy
\cite{Allender:2009} which is in $\PSPACE$. 


When $\Max$ has A-loss recall and $\Min$ has perfect recall, deciding
maxmin over behavioural strategies is $\NP$-hard
\cite{Cermak::2018}. The question of whether it is $\sqsum$-hard was
posed in \cite{Cermak::2018}. We settle this problem by showing that
even with this restriction it is $\sqsum$-hard. 

\begin{lemma}
  \label{lem:sqsum-gadget}
  For each $n \ge 0$, there is a two-player game $G_{-\sqrt{n}}$
  without absentmindedness such that
  $\maxminbeh(G_{-\sqrt{n}}) = -\sqrt{n}$.
\end{lemma}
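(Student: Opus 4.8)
The plan is to realize $\maxminbeh(G_{-\sqrt n})$ as an explicit two‑variable polynomial optimization whose value is $-\sqrt n$, using a single A‑loss gadget for $\Max$. First I would set up the gadget: $\Max$ has one information set $J$ where she plays $a_0$ (probability $x$) or $a_1$ (probability $1-x$), and both successors of $J$ are placed in a single information set $I$, where she plays $L$ (probability $z$) or $R$. Since the two nodes of $I$ have histories $\sigma\,a_0\cdots$ and $\sigma\,a_1\cdots$ sharing the prefix $\sigma$ up to the signal at $J$ and differing only in the action $a_0\neq a_1$, this is exactly A‑loss recall for $\Max$, with no absent‑mindedness (no information set repeats on any $\pathto$). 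On top of this, a $\chance$/$\Min$ structure that $\Max$ cannot observe selects among a constant number of leaf‑payoff patterns; $\Min$ receives a single signal, hence has perfect recall.

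Second, I would reduce the value. For any fixed behavioural strategy $(x,z)$ of $\Max$, the residual game is a one‑player perfect‑recall game for $\Min$, so by Kuhn's theorem $\min_\tau\payoff=\min_{\text{pure }\tau}\payoff$; thus $\maxminbeh(G_{-\sqrt n})=\max_{(x,z)\in[0,1]^2}\min_j P_j(x,z)$, where $P_j$ is the payoff against $\Min$'s $j$‑th pure choice. The point is that $\Max$, having only A‑loss recall, does \emph{not} satisfy Kuhn's theorem: her variables $(x,z)$ range over the full square and each $P_j$ is a genuine bilinear form in $(x,z)$, not the convex hull of corner payoffs. Crucially, the four leaves reached as $J\xra{a_0}I\xra{L}$, $J\xra{a_0}I\xra{R}$, $J\xra{a_1}I\xra{L}$, $J\xra{a_1}I\xra{R}$ have path‑probabilities $xz,\,x(1-z),\,(1-x)z,\,(1-x)(1-z)$, so by choosing their rational utilities I can make each $P_j$ an arbitrary bilinear $\gamma_j xz+\beta_j x+\beta'_j z+\beta^0_j$, and because $\Max$ cannot distinguish the selected pattern the same $(x,z)$ is used in every $P_j$.

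Third, the heart of the argument: I would pick the utilities so that, after collecting terms and a fixed rational scaling plus additive constant (realized by an extra constant $\chance$ branch), $\min_j P_j(x,z)$ equals $\min\{x,\,z,\,c-xz\}$ in rescaled coordinates. Maximizing $\min\{x,z,c-xz\}$ over the square is attained at the symmetric point $x=z=t$ with $t=c-t^2$, i.e. at the positive root of $t^2+t-c=0$; choosing $c$ (and the rescaling) makes this root a fixed rational multiple of $\sqrt n$, and the outer affine transform turns the optimal value into exactly $-\sqrt n$. Because $t$ must sit in $[0,1]$ while $\sqrt n$ grows, I would introduce a rational scale factor $M\ge\sqrt n$ (e.g. $M=n$) into the utilities so that the crossing point stays inside the unit square; the bookkeeping is routine and produces leaf utilities of polynomial bit‑size in $n$.

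The main obstacle is twofold. The first difficulty is pinning down the rational utilities and scalings so that the quadratic's root is exactly the intended rational multiple of $\sqrt n$ and the overall value is \emph{exactly} $-\sqrt n$ for every $n$, with the optimum kept interior to $[0,1]^2$. The second, and more delicate, difficulty is the rigorous verification that the symmetric interior crossing is truly the maximiser of $\max_{(x,z)}\min_j P_j$ (checking monotonicity on each side of the crossing and all boundary cases $x,z\in\{0,1\}$, and exhibiting the matching pure $\Min$ response so the computed number is a genuine $\max\min$) — all while confirming throughout the construction that $\Max$'s information set $I$ witnesses A‑loss rather than signal loss, that $\Min$ has perfect recall, and that no player is absent‑minded.
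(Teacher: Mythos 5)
Your overall architecture is the same as the paper's: an A-loss gadget giving $\Max$ two pooled information sets with variables $(x,z)$ and no absentmindedness, reduction of $\Min$'s behavioural best response to a pure one (Kuhn's theorem for the residual one-player perfect-recall game), so that $\maxminbeh = \max_{(x,z)\in[0,1]^2}\min_j P_j(x,z)$ with each $P_j$ bilinear, followed by a positive rational rescaling of utilities and an extra $\chance$ branch to shift the value. (The paper realizes exactly this scheme with two bilinear functions, $P_1 = nxz$ and $P_2=(1-x)(1-z)$.) The gap is in your third step, and it is fatal as stated. The maximum of $\min\{x,z,c-xz\}$ over the square is the positive root $t^*$ of $t^2+t=c$, i.e.\ $t^*=\frac{-1+\sqrt{1+4c}}{2}$. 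With $c$ rational, the only way $\sqrt n$ can appear is $1+4c=q^2 n$ with $q>0$ rational, which forces $t^*=-\frac12+\frac{q}{2}\sqrt n$. So first, $t^*$ is never a rational multiple of $\sqrt n$ for non-square $n$ (its rational part is exactly $-\frac12$), contrary to what you claim --- though that alone could be absorbed by the additive shift. Second, and decisively, the coefficient of $\sqrt n$ in $t^*$ is \emph{positive}, and the normalizations you allow yourself --- multiplying all utilities by a positive rational and mixing with a rational-payoff leaf under a $\chance$ root --- are positive affine maps of the value, which keep that coefficient positive. (You cannot flip the sign by negating utilities: that exchanges the roles of maximizer and minimizer, and maxmin and minmax differ in these games.) Hence every game built by your recipe has value $\beta+\gamma\sqrt n$ with $\gamma>0$ and rational $\beta,\gamma$, and can never equal $-\sqrt n$.

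The repair is to choose crossing functions whose optimal crossing value carries $\sqrt n$ with a \emph{negative} coefficient, and this is precisely what the paper's choice achieves. With $P_1=nxz$ and $P_2=(1-x)(1-z)$ (both bilinear, so realizable in your gadget), the crossing is the hyperbola $nxz=(1-x)(1-z)$; eliminating $z$ reduces the problem to $\max_{x\in[0,1]}\frac{nx(1-x)}{1+(n-1)x}$, attained at $x=\frac{\sqrt n -1}{n-1}$ with value $\frac{n}{(\sqrt n +1)^2}=\frac{n(n+1-2\sqrt n)}{(n-1)^2}$. Scaling the utilities by the positive rational $\frac{(n-1)^2}{n}$ turns this into $(\sqrt n -1)^2 = n+1-2\sqrt n$, and a $\chance$ root mixing this gadget with a leaf of payoff $-(n+1)$, each with probability $\frac12$, gives exactly $-\sqrt n$. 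In short, the irrational value must arise from optimizing along a hyperbolic crossing (a ratio of polynomials in $x$), not from the root of $t^2+t=c$; once the crossing functions are chosen this way, your remaining verification obligations (interior versus boundary maximizer, exhibiting $\Min$'s matching pure response, and the A-loss/perfect-recall bookkeeping) go through as in the paper.
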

\begin{proof}
  First we construct a game $G_1$ whose maxmin value is
  $\frac{n(n+1 - 2\sqrt{n})}{(n-1)^2}$ from which we get a game $G_2$
  with maxmin value $n+1 - 2\sqrt{n}$ by multiplying the payoffs of
  $G_1$ with $\frac{(n-1)^2}{n}$. Then we take a trivial game $G_3$
  with maxmin value $-(n+1)$ and finally construct $G_{-\sqrt{n}}$ by
  taking a root vertex $r$ as chance node and transitions with $1/2$
  probability from $r$ to $G_2$ and $G_3$.

  We now describe the game $G_1$. The game tree has 7 internal nodes
  and 16 leaf nodes with payoffs. At the root node $s_{\epsilon}$,
  there are 2 actions $a_0$ and $a_1$, playing which the game moves to
  $s_0$ or $s_1$. Then again at $s_i$ the action $b_0$ and $b_1$ are
  available playing which the game can go to $s_{0,0},s_{0,1},s_{1,0}$
  or $s_{1,1}$. And finally again playing action $c_0$ or $c_1$ the
  game can go to the leaf states
  $\{ t_{i,j,k} \mid i,j,k \in \{0,1\} \}$.  The node $s_{\epsilon}$
  is in one information set $I_1$ and belongs to $\Max$. The nodes
  $s_0$ and $s_1$ are in one information set $I_2$ and also belong to
  $\Max$. Nodes $s_{0,0},s_{0,1},s_{1,0}$ and $s_{1,1}$ are in the
  same information set $J$ and belong to $\Min$. The payoff at
  $t_{0,0,0}$ is $n$ and the payoff at $t_{1,1,1}$ is 1. Everywhere
  else the payoff is $0$.  \begin{figure}[t]
\begin{center}
\tikzset{
triangle/.style = {regular polygon,regular polygon sides=3,draw,inner sep = 1.5},
circ/.style = {circle,draw,inner sep = 1.5},
term/.style = {circle,draw,inner sep = 1.5,fill=black},
sq/.style = {rectangle,draw,inner sep = 2}
}
\begin{tikzpicture}
\tikzstyle{level 1}=[level distance=10mm,sibling distance = 25mm]
\tikzstyle{level 2}=[level distance=10mm,sibling distance=30mm]
\tikzstyle{level 3}=[level distance=10mm,sibling distance=15mm]
\tikzstyle{level 4}=[level distance=10mm,sibling distance=8mm]

\begin{scope}[->,>=stealth]

\node(-1)[triangle]{}
child{
node(0)[circ]{}	
	child{
	node(1)[circ]{}	
		child{
		node(3)[sq]{}
			child{
			node(7)[term,label=below:{$(n-1)^2$}]{}
			edge from parent node[left]{$c_0$}
			}
			child{
			node(8)[term,label=below:{0}]{}
			edge from parent node[right]{$c_1$}	
			}
		edge from parent node[left]{$b_0$}
		}
		child{
		node(4)[sq]{}
			child{
			node(9)[term,label=below:{0}]{}
			edge from parent node[left]{$c_0$}
			}
			child{
			node(10)[term,label=below:{0}]{}
			edge from parent node[right]{$c_1$}	
			}
		edge from parent node[right]{$b_1$}	
		}
	edge from parent node[above]{$a_0$}
	}
	child{
	node(2)[circ]{}
		child{
		node(5)[sq]{}
			child{
			node(11)[term,label=below:{0}]{}
			edge from parent node[left]{$c_0$}
			}
			child{
			node(12)[term,label=below:{0}]{}
			edge from parent node[right]{$c_1$}	
			}
		edge from parent node[left]{$b_0$}
		}
		child{
		node(6)[sq]{}	
			child{
			node(13)[term,label=below:{0}]{}
			edge from parent node[left]{$c_0$}
			}
			child{
			node(15)[term,label=below:{$\frac{(n-1)^2}{n}$}]{}
			edge from parent node[right]{$c_1$}	
			}
		edge from parent node[right]{$b_1$}
		}
	edge from parent node[above]{$a_1$}
	}
edge from parent node[above]{$\frac{1}{2}$}	
}
child{
node(16)[term,label=below:{$-(n+1)$}]{}
edge from parent node[above]{$\frac{1}{2}$}	
}	
;
\end{scope}

\draw [dashed,thick,blue,out=-15,in=-170] (1) to (2);
\draw [dashed,thick,red,out=15,in=165] (3) to (4)
[out=15,in=165] (4) to (5)
[out=15,in=165] (5) to (6)
;
\end{tikzpicture}

\end{center}
\caption{Game $G_{-\sqrt{n}}$}
\label{fig:sqrt-n}
\end{figure}

  Figure~\ref{fig:sqrt-n} depicts the game $G_{-\sqrt{n}}$ and the
  left subtree from chance node is $G_1$ after scaling the payoffs by
  $\frac{(n-1)^2}{n}$.  We wish to compute the maxmin value obtained
  when both the players play behavioural strategies.  Assigning variables
  $x,y,z$ for information sets $I_1,I_2,J$ respectively, the maxmin
  value is given by the expression
  \[
    \max_{x,y \in [0,1]} \min_{z \in [0,1]} nxyz + (1-x)(1-y)(1-z)
  \]
  which in this case is equivalent to
  \[
    \max_{x,y \in [0,1]} \min ( nxy , (1-x)(1-y) )
  \]
  since the best response of $\Min$ is given by a pure strategy when
  $\Min$ has no absentmindedness. It turns out this value is achieved
  when $nxy = (1-x)(1-y)$.
  We use this to get rid of $y$ and reduce to:
  \[
    \max_{x \in [0,1]} \frac{nx(1-x)}{1+(n-1)x}
  \]
  Calculating this we see that the maximum in $[0,1]$ is achieved at
  $x = \frac{\sqrt{n} -1}{n-1}$. After evaluation we get
  $\maxminbeh(G_1) = \frac{n(n+1 - 2\sqrt{n})}{(n-1)^2}$ as intended,
  at $ x = y = \frac{\sqrt{n} -1}{n-1}$.
\end{proof}

\begin{theorem}
  \label{thm:two-player-no-absentminded}
  Deciding $\maxminbeh \ge 0$ is $\sqsum$-hard in imperfect recall
  games without absentmindedness.
\end{theorem}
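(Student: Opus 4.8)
The plan is to reduce the $\sqsum$ problem to deciding $\maxminbeh \ge 0$, using the gadget from Lemma~\ref{lem:sqsum-gadget} as the basic building block. Given a $\sqsum$ instance $(a_1, \dots, a_m, p)$ of positive integers, I want to build a single game $G$ without absentmindedness whose maxmin value is a fixed positive multiple of $p - \sum_{i=1}^m \sqrt{a_i}$, so that $\maxminbeh(G) \ge 0$ holds exactly when $\sum_{i=1}^m \sqrt{a_i} \le p$, which is precisely the $\sqsum$ question.

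First I would assemble the component games. For each $a_i$, Lemma~\ref{lem:sqsum-gadget} supplies a two-player game $G_{-\sqrt{a_i}}$ without absentmindedness with $\maxminbeh(G_{-\sqrt{a_i}}) = -\sqrt{a_i}$; since the $a_i$ are integers this is exactly what the lemma delivers. I would also take a trivial game $H$ consisting of a single leaf of payoff $p$, so that $\maxminbeh(H) = p$. Then I form $G$ by introducing a fresh $\chance$ node as the root with $m+1$ outgoing edges, each of probability $\frac{1}{m+1}$, leading respectively to the $m$ copies $G_{-\sqrt{a_1}}, \dots, G_{-\sqrt{a_m}}$ and to $H$. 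Crucially, I rename signals so that the subgames use pairwise disjoint signal sets; this keeps each subgame's strategy space independent, and since every component is absentmindedness-free and the chance root adds none, the whole game $G$ is without absentmindedness. The payoffs occurring in $G_{-\sqrt{a_i}}$ are polynomials in $a_i$ (such as $(a_i-1)^2$ and $(a_i-1)^2/a_i$), so each component has size polynomial in $\log a_i$, and $G$ has size polynomial in the input; hence the reduction runs in polynomial time.

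The key step is to establish the additive decomposition of the maxmin value over a chance node whose subgames have disjoint information sets:
\[
  \maxminbeh(G) \;=\; \sum_{v \,:\, r \to v} \d(r \to v)\cdot \maxminbeh(G_v),
\]
where $G_v$ is the subgame rooted at the child $v$. Because the information sets are disjoint across subgames, a behavioural strategy $\s$ for $\Max$ splits into independent restrictions $\s_v$, one per subgame, and likewise $\t$ for $\Min$ splits into $\t_v$; for any fixed such strategies the payoff decomposes as $\payoff(G_{\s,\t}) = \sum_v \d(r \to v)\,\payoff((G_v)_{\s_v,\t_v})$, since the chance root routes into exactly one subgame. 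Consequently $\Min$ can minimise each summand independently, and then $\Max$ can maximise each independently, which pushes both the inner $\min$ and the outer $\max$ inside the sum and yields the displayed identity. Applying this with the weights $\frac{1}{m+1}$ gives
\[
  \maxminbeh(G) \;=\; \frac{1}{m+1}\Bigl(p - \sum_{i=1}^m \sqrt{a_i}\Bigr),
\]
so $\maxminbeh(G) \ge 0$ if and only if $\sum_{i=1}^m \sqrt{a_i} \le p$, completing the reduction.

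I expect the only real obstacle to be justifying the additive decomposition rigorously, i.e.\ that $\max$ and $\min$ both commute with the convex combination introduced by the chance root. This commutation is exactly where the disjointness of information sets across subgames is used: without it, $\Max$ (or $\Min$) could be forced to correlate behaviour between subgames through a shared signal, and the clean sum would fail. Everything else — the polynomial bound on the size of the gadgets, the absence of absentmindedness in $G$, and the arithmetic identity relating the value to $p - \sum_i \sqrt{a_i}$ — is routine once this decomposition is in place.
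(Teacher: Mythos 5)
Your proposal is correct and follows essentially the same route as the paper: the paper's proof builds exactly this game $\hat{G}$ with a chance root branching with probability $\frac{1}{m+1}$ to each $G_{-\sqrt{a_i}}$ and to a trivial game with payoff $p$, concluding that $\Max$ guarantees payoff $0$ iff $\sum_{i=1}^m \sqrt{a_i} \le p$. The only difference is that you spell out details the paper leaves implicit — the disjointness of signal sets across subgames, the additive decomposition of $\maxminbeh$ over the chance root, and the polynomial size bound — all of which are accurate.
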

\begin{proof}
  From the positive integers $a_1,...,a_m$ and $p$ which are the
  inputs to the $\sqsum$ problem, we construct the following game
  $\hat{G}$. At the root there is a chance node $\hat{r}$. From
  $\hat{r}$ there is a transition with probability $\frac{1}{m+1}$ to
  each of the games $G_{-\sqrt{a_i}}$ (as constructed in
  Lemma~\ref{lem:sqsum-gadget}) and also a trivial game with payoff
  $p$. Now $\Max$ can guarantee a payoff $0$ in $\hat{G}$ iff
  $\sum_{i=1}^m \sqrt{a_i} \leq p$.
\end{proof}
 
In the proof above since in each of $G_{-\sqrt{n}}$, $\Max$ has A-loss
recall and $\Min$ has perfect recall, the same holds in $\hat{G}$.
Hence it is $\sqsum$-hard to decide the problem even when $\Max$ has
A-loss recall and $\Min$ has perfect recall.

 \section{Polynomial optimization}
\label{sec:polyn-optim}
In Section~\ref{sec:imperf-recall-games} we have seen that
manipulating polynomials can be seen as solving one-player imperfect
recall games (Lemma~\ref{lem:polynomial-game} and
Figure~\ref{fig:polynomial-game}). In particular, optimizing a
polynomial with $n$ variables over the domain $[0, 1]^n$ (the unit
hypercube) can be viewed as finding the optimal payoff in the
equivalent game. On the games side, we know that games with perfect
recall can be solved in polynomial time~\cite{KollerMegiddo::1992,
  vonStengel::1996}. We ask the natural question on the polynomials
side: what is the notion of perfect recall in polynomials? Do perfect
recall polynomials correspond to perfect recall games? We answer this
question in this section.

Consider a set $X$ of real variables. For a variable $x \in X$, we
write $\bx = 1 - x$ and call it the \emph{complement} of $x$. Let
$\bar{X} = \{ \bx \mid x \in X \}$ be the set of complements.  We
consider polynomials with integer coefficients having terms over
$X \cup \bar{X}$. Among such polynomials, we restrict our attention to
multilinear polynomials: each variable appearing in a term has degree
$1$ and no term contains a variable and its complement. Let $M(X)$ be
the set of such polynomials. For example
$3xyz - 5 \bar{x} \bar{y} z + 9 \bar{z} \in M(\{x, y, z\})$ whereas
$4x\bar{x} \not \in M(\{x\})$ and $4x^2 \notin M(\{x\})$. 

For $f, g \in M(X)$ we write $f \equiv g$ if eliminating the negations
from $f$ and $g$ gives the same full expansion. For example,
$y - yx \equiv y\bar{x}$ and $y\bar{x} + x \equiv y + x\bar{y}$. By
definition, the full expansion $f'$ of a polynomial $f$ satisfies
$f \equiv f'$. Also note that $\equiv$ is an equivalence relation.

We are interested in the problem of optimizing a polynomial
$f \in M(X)$ over the unit hypercube $[0, 1]^{|X|}$. The important
property is that the optimum occurs at a vertex. This corresponds to
saying that in a one-player imperfect recall game without
absentmindedness, the optimum is attained at a pure strategy (which is
shown by first proving that every behavioural strategy has an
equivalent mixed strategy and hence there is at least one pure
strategy with a greater value).  Due to this property, the decision
problem is in $\NP$. Hardness in $\NP$ follows from Corollary 2.8 of
\cite{KollerMegiddo::1992}.

\begin{theorem}[\cite{KollerMegiddo::1992}]
  The optimum of a polynomial in $M(X)$ over the unit hypercube
  $[0, 1]^{|X|}$ occurs at a vertex. Deciding if the maximum is
  greater than or equal to a rational is $\NP$-complete.
\end{theorem}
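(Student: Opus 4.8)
The statement has two parts: the optimum over the hypercube is attained at a $0/1$ vertex, and the corresponding decision problem is $\NP$-complete. For the vertex property, the plan is to exploit that every $f \in M(X)$ is affine in each variable separately. Writing $X = \{x_1, \dots, x_n\}$ and substituting $\bx = 1 - x$ for every complement, then fully expanding, yields a polynomial $\tilde f(x_1, \dots, x_n)$ over $X$ alone; since each term of $f$ contains every variable at most once and never together with its complement, $\tilde f$ still has degree at most $1$ in each $x_i$. Now fix a maximizer $(d_1, \dots, d_n) \in [0,1]^n$. Regarding $\tilde f$ as a function of $x_1$ with the remaining coordinates frozen at $d_2, \dots, d_n$ gives an affine map $x_1 \mapsto \alpha x_1 + \beta$, whose maximum over $[0,1]$ is attained at an endpoint $x_1 \in \{0,1\}$. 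Replacing $d_1$ by such an endpoint does not decrease the value, and iterating over all coordinates yields a vertex of the hypercube at which $\tilde f$, and hence $f$, is at least its original maximum. This establishes that the optimum occurs at a vertex.

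Membership in $\NP$ is then immediate. By the vertex property, $\max_{[0,1]^n} f \ge r$ holds iff some $v \in \{0,1\}^n$ satisfies $f(v) \ge r$, so such a $v$ serves as a certificate of size $n = |X|$. Evaluating $f$ at a $0/1$ point, where each complement $\bx$ equals $1 - v_x \in \{0,1\}$, takes time polynomial in the encoding of $f$, so the inequality $f(v) \ge r$ is checked in polynomial time.

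For $\NP$-hardness I would reduce from $3$-SAT. Given a $3$-CNF formula $\varphi$ with clauses $C_1, \dots, C_m$ over $x_1, \dots, x_n$ (assuming, as we may, that the literals of each clause are over distinct variables), encode each clause $C_j$ by the monomial $t_j$ that is the product, over the literals of $C_j$, of the factor $\bx$ for a positive literal $x$ and the factor $x$ for a negative literal $\neg x$. Then $t_j$ evaluates to $1$ on a $0/1$ assignment precisely when every literal of $C_j$ is false, i.e.\ when $C_j$ is violated. Setting $f = \sum_{j=1}^m (1 - t_j)$, each $t_j$ is a product of distinct variables and complements in which no variable appears with its own complement, so $f \in M(X)$, and its size is polynomial in that of $\varphi$. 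On any $0/1$ assignment $f$ equals the number of satisfied clauses, so $\max_{\{0,1\}^n} f = m$ iff $\varphi$ is satisfiable; combining with the vertex property, $\max_{[0,1]^n} f \ge m$ iff $\varphi$ is satisfiable, which completes the reduction. A reduction from \textsc{Max-Cut}, using $\sum_{(i,j) \in E}(x_i \bx_j + \bx_i x_j)$, works equally well.

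The argument is largely routine; the one point requiring care---and which I regard as the crux---is the vertex property, specifically verifying that eliminating complements via $\bx = 1 - x$ preserves degree at most $1$ in each variable, since this is exactly what licenses the coordinate-by-coordinate rounding. Once the vertex property is established, both $\NP$ membership and the hardness reduction follow directly.
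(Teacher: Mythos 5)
Your proof is correct, but it takes a genuinely different route from the paper on both halves of the statement. For the vertex property, the paper argues game-theoretically: a polynomial in $M(X)$ is the payoff function of a one-player imperfect recall game \emph{without absentmindedness}, and in such games every behavioural strategy has an outcome-equivalent mixed strategy; since a mixed strategy's payoff is a convex combination of pure-strategy payoffs, some pure strategy --- i.e.\ some vertex of the hypercube --- does at least as well. Your coordinate-by-coordinate rounding, licensed by the observation that eliminating complements leaves $f$ affine in each variable, is the analytic counterpart of that Kuhn-style argument: more elementary and fully self-contained, though it bypasses the polynomial/game dictionary that the paper deliberately builds this section on (the mixed strategy equivalent of a behavioural strategy is exactly the product distribution over vertices, and multilinearity is what makes the behavioural payoff equal its expectation under that distribution). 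For hardness, the paper proves nothing itself --- it cites Corollary 2.8 of Koller--Megiddo --- whereas you give an explicit reduction from $3$-SAT, and it is sound: each $t_j$ is a product of distinct variables and complements with no variable meeting its own complement, so $f = m - \sum_j t_j$ lies in $M(X)$, and on vertices $f$ counts satisfied clauses, so by the vertex property $\max_{[0,1]^n} f \ge m$ iff $\varphi$ is satisfiable. (If one is pedantic about whether $M(X)$ admits constant terms, test $-\sum_j t_j$ against the threshold $0$ instead; nothing changes.) $\NP$ membership is the same in both treatments. In short, the paper leverages its game correspondence and outsources the hardness to the cited reference, while you give a direct, reference-free proof; both are valid, and yours is arguably the more portable argument, while the paper's better serves its narrative that games and polynomials are two views of one object.
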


Our goal is to characterize a subclass of polynomials which coincide
with the notion of perfect recall in
games. 
For this we assume that games have exactly two actions from each
information set (any game can be converted to this form in
polynomial-time). The polynomials arising out of such games will come
from $M(X)$ where going left on information set $x$ gives terms with
variable $x$ and going right gives terms with $\bar{x}$. When the game
has perfect recall, every node in the information set of $x$ has the
same history: hence if some node in an information set $y$ is reached
by playing left from an ancestor $x$, every node in $y$ will have this
ancestor and action in the history. This implies that every term
involving $y$ will have $x$. If the action at $x$ was to go right to
come to $y$, then every term with $y$ will have $\bar{x}$. This
translates to a decomposition of polynomials in a specific form.

A polynomial $g$ given by $x f_0(X_0) + \bar{x} f_1(X_1) + f_2(X_2)$
is an \emph{$x$-decomposition} of a polynomial $f$ if
$x \notin X_0 \cup X_1 \cup X_2$ and expanding all complements in $g$
and $f$ result in the same complement-free polynomial. The
decomposition $g$ is said to be \emph{disconnected} if $X_0, X_1, X_2$
are pairwise disjoint. For example $g := xyz + 4\bar{x}y + 5\bar{w}$
is an $x$-decomposition of $xyz + 4y - 4xy + 5 - 5w$ which is not
disconnected due to variable $y$.  Using these notions, we now define
perfect recall polynomials in an inductive manner.

\begin{definition}[Perfect recall polynomials]
  Every polynomial over a single variable has perfect recall. A
  polynomial $f$ with variable set $X$ has perfect recall if there
  exists an $x \in X$ and an $x$-decomposition
  $x f_0(X_0) + \bar{x} f_1(X_1) + f_2(X_2)$ of $f$ such that (1) it
  is disconnected and (2) each $f_i(X_i)$ has perfect recall.
\end{definition}

This definition helps us to inductively generate a perfect recall game
out of a perfect recall polynomial and vice-versa, giving us the following
theorem.

\begin{theorem}
  \label{thm:perfect-recall-poly-game}
  A polynomial $f$ in $M(X)$ has perfect recall iff there is a
  one-player perfect recall game whose payoff is given by $f$. This
  transformation from perfect recall polynomial to one-player perfect
  recall game can be computed in polynomial time.
\end{theorem}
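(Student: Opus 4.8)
The plan is to prove both directions of the equivalence by induction on the number of variables, closely following the inductive structure of the definition of perfect recall polynomials.

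The plan is to prove both directions of the equivalence simultaneously by induction on the number of variables $|X|$, with the induction mirroring exactly the inductive definition of perfect recall polynomials. The guiding principle is that the \emph{disconnectedness} condition on an $x$-decomposition is the polynomial-side shadow of the \emph{perfect recall} (equal-history) condition on the game: two variables may interact in a common term precisely when the corresponding information sets can sit in a common branch without forcing two nodes of one information set to carry different histories. I will restrict throughout to games with exactly two actions per information set and no absent-mindedness, so that each root-to-leaf path meets each information set at most once and the payoff lies in $M(X)$.

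For the direction from polynomials to games I argue by induction on $|X|$. In the base case $f$ is over a single variable (or a constant), and a single information set with two leaves, suitably weighted, realises $f$; this is a one-node instance of Lemma~\ref{lem:polynomial-game} and trivially has perfect recall. For the inductive step, fix a disconnected $x$-decomposition $x f_0(X_0) + \bar{x} f_1(X_1) + f_2(X_2)$ witnessing that $f$ has perfect recall. Since $x \notin X_0 \cup X_1 \cup X_2$, each $X_i$ is strictly smaller than $X$, so by the induction hypothesis each $f_i$ is realised by a perfect recall game $G_i$. I then assemble $G$ from a chance root that branches, with suitable probabilities, into (i) a single $\Max$-node forming the information set $x$, whose left action leads to $G_0$ and whose right action leads to $G_1$, and (ii) a branch leading directly to $G_2$; rescaling the leaf payoffs of the $G_i$ by the reciprocals of the chance probabilities makes the overall payoff equal to $x f_0 + \bar{x} f_1 + f_2 = f$. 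The crucial point is that $G$ again has perfect recall: the only shared prefix among the three subgames is whatever chance did, which does not appear in histories, and because $X_0, X_1, X_2$ are pairwise disjoint no information set of one $G_i$ coincides with one of another, so no equal-history constraint is ever imposed across subgames.

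For the converse I again induct, this time on the number of information sets of a given one-player perfect recall game $G$ with payoff $f$. If there are none, $f$ is constant and we are done. Otherwise pick an information set $x$ that is topmost, i.e. no player node precedes any node of $x$; by perfect recall all nodes of $x$ share the same (here empty) history. Partitioning the leaves according to whether their path takes the left action at $x$, the right action at $x$, or avoids the information set $x$ altogether yields a decomposition $f = x f_0 + \bar{x} f_1 + f_2$. This decomposition is automatically disconnected: if a variable $w$ occurred in two of the parts, there would be two nodes of the information set $w$ whose histories disagree on the $x$-block (one containing $x\,l$, another containing $x\,r$, or one containing $x$ and the other not), contradicting perfect recall. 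Collecting the left-subtrees of the nodes of $x$ under a fresh chance root (renormalising probabilities) produces a perfect recall game with strictly fewer information sets whose payoff is $f_0$, and similarly for $f_1$ and $f_2$; the induction hypothesis then gives that each $f_i$, and hence $f$, has perfect recall.

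Finally, the polynomial-time claim follows because the recursion has depth at most $|X|$ and the game produced stays polynomial in the size of $f$. The one step that needs care is locating a valid split variable. For each candidate $x$ I write $f = xA + B$ in full (complement-free) expansion; the partition is then forced, with $X_1$ the set of variables occurring both in $A$ and in $B$, $X_0 = \mathrm{vars}(A)\setminus X_1$, and $X_2 = \mathrm{vars}(B)\setminus X_1$, and it suffices to check in polynomial time that no term crosses two of the resulting parts. The perfect-recall hypothesis guarantees that at least one $x$ passes this check, and I recurse on the smaller polynomials $f_0, f_1, f_2$. I expect the main obstacle to be making the correspondence between disconnectedness and perfect recall fully rigorous in the converse direction, and in particular verifying that extracting the three subgames strictly decreases the number of information sets while preserving perfect recall after the chance probabilities are renormalised.
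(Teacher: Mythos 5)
Your two inductions establishing the equivalence are essentially the paper's own proof. The polynomial-to-game direction (chance root with one branch to a fresh control node choosing between $G_0$ and $G_1$, another branch to $G_2$, payoffs rescaled by the reciprocal chance probabilities, disconnectedness ensuring no information set straddles the subgames) is identical to the paper's construction, which moreover observes that it even yields a perfect \emph{information} game. Your game-to-polynomial direction, which picks a topmost information set and partitions leaves into left-of-$x$, right-of-$x$ and $x$-avoiding, merely unifies the paper's two cases (root a control node vs.\ root a chance node) into one, and the extraction of the subgames for $f_0, f_1, f_2$ under a renormalised chance root is again the paper's argument. Up to this point the proposal is sound.

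The gap is in the polynomial-time claim, in two places. First, your procedure for locating a split variable begins by writing $f = xA + B$ in \emph{full complement-free expansion}. That expansion can be exponentially larger than the input: the perfect recall polynomial $\bar{x}_1 \bar{x}_2 \cdots \bar{x}_n \in M(X)$ has size $O(n)$ but $2^n$ monomials after expansion. The paper avoids this by working directly on the $M(X)$ representation via the notion of cancellation ($x$ cancels $y$ with $b$ if substituting $x = b$ removes all $y$-terms), together with lemmas showing that cancellation is invariant under $\equiv$, so the classes $X_0, X_1, X_2$ can be read off the given representation without expanding. Second, and more fundamentally, your recursion is greedy: you find \emph{some} $x$ whose $x$-decomposition is disconnected and recurse on its parts. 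But the definition of perfect recall only guarantees that the parts of some \emph{witnessing} decomposition have perfect recall; the disconnected decomposition your algorithm happens to find could, a priori, be a different one (for a different variable) whose parts fail to have perfect recall, in which case your recursion either fails to output a game for a genuinely perfect recall input, or must backtrack over the choice of split variable at every level, losing polynomiality. Closing exactly this hole is the content of the paper's Proposition~\ref{prop:perfect-poly-disconnected-decomp} (in a perfect recall polynomial, \emph{every} disconnected $x$-decomposition has perfect recall parts), whose proof needs a nontrivial case analysis relating two disconnected decompositions of the same polynomial (Lemma~\ref{lem:relating-disconnected-decomp}). Your proposal uses this fact implicitly without proving it, and it is not obvious.
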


We prove both directions of the above theorem separately in the following lemmas.
The proof below showcases a stronger result that from a perfect recall polynomial, we can in fact construct a
perfect information game. 

\begin{lemma}
  For every perfect recall polynomial $f$, there is a perfect
  information game with payoff given by $f$.
\end{lemma}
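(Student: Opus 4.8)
The plan is to proceed by structural induction on the definition of perfect recall polynomials, mirroring the inductive structure of the definition with an inductive construction of the game tree. The base case is a polynomial $f$ over a single variable $x$, i.e. $f \in M(\{x\})$, which has the form $c_0 x + c_1 \bar{x} + c_2$ (after expanding). For this I would build a small perfect information game: a single $\Max$ node with two actions $l$ (going left, contributing $x$) and $r$ (going right, contributing $\bar{x}$), leading to appropriately weighted leaves whose payoffs realize the coefficients, plus a constant term handled by a leaf reachable directly. Since there is only one decision node, there is no information-set sharing at all, so the game is trivially perfect information (and hence perfect recall).

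**For the inductive step,** suppose $f$ has perfect recall via a disconnected $x$-decomposition $x f_0(X_0) + \bar{x} f_1(X_1) + f_2(X_2)$, where $X_0, X_1, X_2$ are pairwise disjoint and each $f_i$ has perfect recall. By the induction hypothesis, I obtain three perfect information games $G_0, G_1, G_2$ whose payoffs realize $f_0, f_1, f_2$ respectively. I would then assemble a new game $G$ as follows: create a fresh $\Max$ node $v$ for the variable $x$ with two actions; the left action leads into $G_0$ (realizing the $x f_0$ summand), the right action leads into $G_1$ (realizing $\bar{x} f_1$). The summand $f_2(X_2)$, which does not depend on $x$, must be attached so that its contribution is counted regardless of the action at $x$. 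The natural device, consistent with the reduction of Lemma~\ref{lem:polynomial-game} and the running examples, is to place a $\chance$ node at the root that routes with some fixed probabilities to the $x$-gadget (carrying $G_0, G_1$) and to the game $G_2$; the chance probabilities then scale the summands, and I would rescale leaf payoffs to exactly recover the coefficients of $f$.

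**The key point that makes this work—and the main obstacle—is the disconnectedness condition.** Because $X_0, X_1, X_2$ are pairwise disjoint, no variable's information set is split across the three subgames, so I can take the information-set labels of $G_0, G_1, G_2$ to be on disjoint variable sets and simply keep them as-is. This is exactly what guarantees that the resulting game has \emph{perfect information}: each variable $x_i$ appears as a decision node on a single branch with a well-defined unique history, and no two nodes are forced into a shared information set with differing histories. I must verify carefully that the assembled game is a genuine perfect information game, meaning every information set is a singleton (equivalently, that I never merge nodes coming from different subgames); disconnectedness is precisely the hypothesis ensuring this, since a shared variable would have forced a nontrivial information set spanning two branches.

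**Finally,** I would check that the payoff polynomial of the constructed $G$ equals $f$. Reading off the payoff as in the definition of $\payoff(G_{\s,\t})$, a behavioural strategy assigns probability $x$ to the left action at $v$ and probabilities to the inner information sets; multiplying chance probabilities, the action probability at $v$, and inductively the payoffs of the subgames yields precisely $x f_0(X_0) + \bar{x} f_1(X_1) + f_2(X_2) \equiv f$ after the rescaling of leaf payoffs absorbs the chance weights. Since each inductive step adds a constant number of nodes on top of the subgames and the recursion depth is bounded by the number of variables, the whole construction is computable in polynomial time, as required for Theorem~\ref{thm:perfect-recall-poly-game}. I expect the only delicate bookkeeping to be the rescaling of payoffs to cancel the fixed chance probabilities, which is routine arithmetic rather than a conceptual difficulty.
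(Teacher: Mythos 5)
Your proposal is correct and follows essentially the same route as the paper's proof: induction on the perfect-recall structure, with the inductive step realized by a chance node at the root splitting (with fixed probability) between a control node for $x$ whose children are the games for $f_0$ and $f_1$ and the game for $f_2$, followed by rescaling the leaf payoffs to cancel the chance weights. The paper's construction is exactly this (probabilities $\frac{1}{2}$, payoffs multiplied by $2$), and it likewise relies on disconnectedness to ensure each variable occurs at a single node, yielding perfect information.
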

\begin{proof}
  We construct the game inductively. For single variable polynomials
  $c_0 x + c_1 \bar{x}$, the game has a single non-terminal node with
  two leaves as children. The left leaf has payoff $c_0$ and the right
  has payoff $c_1$. The behavioural strategy at this single node is
  given by $x$ to the left node and $\bar{x}$ to the right node and
  hence the payoff is given by $c_0 x + c_1 \bar{x}$. Now consider a
  perfect recall polynomial with multiple variables. Consider the
  $x$-decomposition $x f_0(X_0) + \bar{x} f_1(X_1) + f_2(X_2)$ which
  witnesses the perfect recall. Each $X_i$ has fewer variables since
  $x$ is not present. By induction, there are perfect recall games
  $G_0, G_1, G_2$ whose payoffs are given by $f_0, f_1, f_2$
  respectively. Construct game $G$ with the root being a $\chance$
  node with two transitions each with probability $\frac{1}{2}$. To
  the right child attach the game $G_2$. The left child is a control
  node with left child being game $G_0$ and the right child being
  $G_1$. This node corresponds to variable $x$. Finally multiply all
  payoffs at the leaves with $2$. The payoff of this game is given by
  $x f_0(X_0) + \bar{x} f_1(X_1) + f_2(X_2)$. Since the decomposition
  is disconnected, the constructed is also perfect recall.  This
  construction gives us a perfect information game.
\end{proof}

\begin{lemma}
  The payoff of a perfect recall game is given by a perfect recall
  polynomial.
\end{lemma}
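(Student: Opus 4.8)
The plan is to prove the statement by induction on the number $k$ of information sets of the given one-player perfect recall game $G$, peeling off one information set at a time. Since every game can be assumed to have exactly two actions per information set, going \emph{left} at the information set of $x$ contributes the factor $x$ to a leaf's monomial and going \emph{right} contributes $\bar x$; writing the payoff as $\payoff(G_{\s}) = \sum_{t} \Uu(t)\,\Cc(t)\,\s(t)$, each monomial is, up to its rational coefficient $\Uu(t)\Cc(t)$, the product over the information sets met on $\pathto(t)$ of the corresponding literal. Absence of absent-mindedness (which perfect recall implies) guarantees that every path meets a fixed information set at most once, so each variable occurs with exponent $0$ or $1$ and the payoff lies in $M(X)$.

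For the inductive step I would fix any information set $I_x$ of $G$ and partition the leaves into those whose path takes the left action at $I_x$, those taking the right action, and those avoiding $I_x$ altogether; grouping the corresponding monomials yields $f = x\,f_0(X_0) + \bar x\, f_1(X_1) + f_2(X_2)$ with $x \notin X_0\cup X_1\cup X_2$, which is an $x$-decomposition of $f$ (the identity is a mere regrouping, so the complement-free expansions of $f$ and of this expression agree). The key use of perfect recall is disconnectedness: by definition all nodes of an information set $I_y$ share the same history $\seq$, so either every node of $I_y$ lies below the left action at $I_x$, or every node lies below the right action, or none passes through $I_x$; hence $I_y$ contributes to exactly one of $f_0, f_1, f_2$, and $X_0, X_1, X_2$ are pairwise disjoint.

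It remains to exhibit, for each part, a perfect recall game of the same payoff with strictly fewer information sets, so that the induction hypothesis applies. For $f_2$ I would prune $G$ by replacing every node of $I_x$ with a leaf of payoff $0$: the surviving histories are unchanged, so the pruned game is perfect recall, its payoff is exactly $f_2$, and its information sets are precisely $X_2$. For $f_0$ I would collect the left-subtrees $T_1,\dots,T_p$ hanging below the nodes $u_1,\dots,u_p$ of $I_x$, attach their roots to a fresh $\chance$ root with probabilities $\Cc(u_j)/P$ where $P=\sum_j \Cc(u_j)$ is the rational probability of reaching $I_x$, and rescale all leaf payoffs by $P$; this game has payoff $f_0$. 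All nodes of each $T_j$ carry a common history prefix ending in the left action at $I_x$ (common because perfect recall forces all $u_j$ to share $\seq(u_j)$), and excising this prefix preserves the equality pattern of histories, so the reassembled game is again perfect recall with information sets $X_0$; the game for $f_1$ is symmetric. Since $k = 1 + |X_0| + |X_1| + |X_2|$ by disjointness, each part has at most $k-1$ information sets, so by induction each $f_i$ is a perfect recall polynomial, and the disconnected decomposition witnesses that $f$ is too. The base cases are immediate: a single-variable polynomial is perfect recall by definition, and a constant $c$ is handled by writing $c = cx + c\bar x$.

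I expect the main obstacle to be precisely the fact that information sets may straddle several $\chance$-branches, which defeats a naive induction on the tree: the payoff of a chance node is a sum of subgame payoffs that may share variables, and perfect recall polynomials are not obviously closed under such sums. The crux that makes the information-set induction go through is that perfect recall pins down a single history for every information set; this simultaneously delivers disconnectedness and licenses the removal of a common prefix when reassembling the subgames $G_0, G_1$, leaving each a genuine perfect recall game. The remaining care is routine: checking that the reach-probability renormalization and payoff rescaling keep all numbers rational, and that pruning preserves both the tree shape and perfect recall.
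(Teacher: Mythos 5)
Your overall strategy is the same as the paper's: induct by peeling off one information set, split the payoff as $x f_0 + \bar{x} f_1 + f_2$ according to which action (if any) the path to each leaf takes at that set, and realize $f_0, f_1, f_2$ as payoffs of smaller perfect recall games ($G_0, G_1$ by hanging the left/right subtrees under a fresh $\chance$ root, $G_2$ by pruning to payoff-$0$ leaves). But there is a genuine gap at the crux: you fix an \emph{arbitrary} information set $I_x$, and for that choice the decomposition need not be disconnected. Your trichotomy for another information set $I_y$ (all its nodes below the left action at $I_x$, or all below the right action, or none passing through $I_x$) is correct, but in the third case the nodes of $I_y$ may be \emph{ancestors} of the nodes of $I_x$, and then $I_y$ contributes to all three parts, not exactly one. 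Concretely, take the perfect recall game whose root is a control node with (singleton) information set $y$, whose left child is a control node with information set $x$ over two leaves with payoffs $a, b$, and whose right child is a leaf with payoff $c$: the payoff is $f = yxa + y\bar{x}b + \bar{y}c$, and your decomposition at $x$ gives $f_0 = ya$, $f_1 = yb$, $f_2 = \bar{y}c$, so $X_0 = X_1 = X_2 = \{y\}$, which is not disconnected. The same issue breaks your construction of $G_0$: when $I_x$ has control ancestors, the probability of reaching $u_j \in I_x$ is not the constant $\Cc(u_j)$ but depends on the strategy at those ancestors, so the reassembled game's payoff omits the ancestor variables that $f_0$, as defined by your grouping of monomials, still contains.

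The repair is exactly the extra care the paper takes: choose $I_x$ to be a \emph{topmost} information set, one all of whose nodes have only $\chance$ ancestors. Such a set exists: take the first control node on any path from the root; since its history is just its own signal, perfect recall forces every node in its information set to also have only chance ancestors. (The paper phrases this as a case split on whether the root is a control node or a chance node, but it is the same choice.) With this choice, no variable sits above $I_x$, your trichotomy does yield disconnectedness, and your subgame constructions are valid. Granting that one change, the rest of your argument goes through; in fact your renormalization of the chance weights by $\Cc(u_j)/P$ with payoffs rescaled by $P$ is more careful than the paper's uniform $\frac{1}{m+1}$ weights, which do not account for the differing chance probabilities of reaching the nodes of $I_x$.
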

\begin{proof}
  Once again, proof proceeds by induction.  Every game with a single
  information set is clearly perfect recall and the payoff polynomial
  is perfect recall by definition. Pick a game $G$ with multiple
  information sets. We need to consider two cases depending on the
  root node.

  Suppose the root $r$ of $G$ is a control node with information set
  $x$. Since $G$ is perfect recall, no other node is present in this
  information set $x$. Let $G_0, G_1$ be the left and right subtree of
  $r$. Again, as $G$ has perfect recall, no information set straddles
  across the two subtrees. Hence the payoff of $G$ can be written as a
  disconnected $x$-decomposition $x f_0(X_0) + \bar{x} f_1(X_1)$ where
  $f_0, f_1$ are the payoffs of $G_0$ and $G_1$
  respectively. Moreover, the games $G_0$ and $G_1$ have perfect
  recall. By induction, the payoffs $f_0, f_1$ are perfect recall
  polynomials.

  Suppose the root $r$ belongs to $\chance$. Walking along some path
  from the root, we will hit the first node that is controlled by the
  player. Let $x$ be the information set for this node. As the player
  has perfect recall, for every node in $x$ the path from the root to
  it contains only $\chance$ nodes. Let $L_0, L_1$ be the set of
  leaves that are reached by taking respectively the left or right
  action from a node in $x$. Let $L_2$ be all the other leaves in
  $G$. The payoff of $G$ can be written as
  $x f_0(X_0) + \bar{x} f_1(X_1) + f_2(X_2)$ where $x f_0(X_0)$ gives
  the contribution of $L_0$, $\bar{x} f_1(X_1)$ gives that of $L_1$
  and $f_2(X_2)$ gives the payoff from $L_2$. This polynomial is an
  $x$-decomposition which is disconnected since $G$ has perfect
  recall. It remains to show that $f_0, f_1, f_2$ are perfect recall
  polynomials. For this we show that there are perfect recall games
  $G_0, G_1, G_2$ with fewer variables that yield $f_0, f_1,
  f_2$. Induction hypothesis then tells that they are perfect recall
  polynomials. Game $G_0$ is as follows: root node belongs to
  $\chance$; add a transition from root to all left subtrees of nodes
  in $x$; if there are $m$ such subtrees then each transition has
  probability $\frac{1}{m+1}$; finally multiply all payoffs by
  $m+1$. Game $G_1$ is similarly constructed by taking right
  subtrees. Game $G_2$ is obtained from $G$ by replacing subtrees
  starting from $x$ by leaves with payoff $0$. Each of these
  constructed games preserves perfect recall.
\end{proof}

Theorem~\ref{thm:perfect-recall-poly-game} allows to optimize perfect recall polynomials in
polynomial-time by converting them to a game. However, for this to be
algorithmically useful, we also need an efficient procedure to check
if a given polynomial has perfect recall. For games, checking perfect
recall is an immediate syntactic check. For polynomials, it is not
direct. We establish in this section that checking if a polynomial has
perfect recall can also be done in polynomial-time. The crucial
observation that helps to get this is the next proposition.

\begin{proposition}
  \label{prop:perfect-poly-disconnected-decomp}
  If a polynomial $f$ has perfect recall, then in every disconnected
  $x$-decomposition $x f_0(X_0) + \bar{x} f_1(X_1) + f_2(X_2)$ of $f$,
  the polynomials $f_0(X_0)$, $f_1(X_1)$ and $f_2(X_2)$ have perfect
  recall.
\end{proposition}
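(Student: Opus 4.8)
I would prove the proposition by strong induction on the number of variables of $f$, after first isolating two facts that pin down how much freedom a disconnected $x$-decomposition has.

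The first is a \emph{uniqueness} statement: for a fixed variable $x$, a disconnected $x$-decomposition of $f$ is determined up to a single constant. Writing $x f_0 + \bar{x} f_1 + f_2 \equiv x(f_0 - f_1) + (f_1 + f_2)$, the complement-free polynomials $f_0 - f_1$ and $f_1 + f_2$ are forced by $f$ (they are the coefficient of $x$ and the value of $f$ at $x = 0$). Disconnectedness forces the partition of the remaining variables too: a variable $v \ne x$ lies in $X_0$, $X_1$ or $X_2$ according to whether it occurs only in the coefficient of $x$, in both pieces, or only at $x = 0$, and since the three sets are pairwise disjoint there is no cancellation to blur this. Hence the non-constant parts of $f_0, f_1, f_2$ are determined, and the only remaining freedom is to add a constant $t$ to $f_0$ and $f_1$ and subtract $t$ from $f_2$. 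The second fact is that perfect recall is preserved under adding a constant (push it into the $f_2$-slot of any witnessing decomposition) and, more generally, that the sum of two perfect-recall polynomials on disjoint variable sets is perfect recall (decompose one and place the other in its $f_2$-slot, keeping the decomposition disconnected); both are short inductions straight from the definition. Combining them, the perfect-recall status of the triple $(f_0, f_1, f_2)$ is the same across all disconnected $x$-decompositions, so it suffices to exhibit \emph{one} disconnected $x$-decomposition of $f$ whose parts have perfect recall.

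To build one I would use the decomposition $f = w g_0 + \bar{w} g_1 + g_2$ witnessing the perfect recall of $f$, with each $g_i$ perfect recall. If $w = x$ we are done. Otherwise $x$ occurs in exactly one of $g_0, g_1, g_2$ — say $g_0$ — because that decomposition is disconnected (the degenerate case $x \notin f$ forces $f_0 \equiv f_1$ constant and $f_2 \equiv f$ up to a constant, which is trivial). The key observation is that the variables of $g_0$ enter $f$ only through the block $w g_0$, so the forced $x$-classification of $f$, restricted to the variables of $g_0$, coincides with that of $g_0$ itself. Since $f$ admits a disconnected $x$-decomposition, every term of $g_0$ stripped of $x$ is then monochromatic for this classification; hence $g_0$ \emph{also} admits a disconnected $x$-decomposition, and by the induction hypothesis (it has fewer variables, as $w$ is absent from it) its parts $q_0, q_1, q_2$ have perfect recall.

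Substituting back gives $f \equiv x(w q_0) + \bar{x}(w q_1) + (w q_2 + \bar{w} g_1 + g_2)$, and the step I expect to be the main obstacle is ensuring this is genuinely disconnected: $w$ could a priori appear in two blocks. The forced-classification identity shows that whichever of $q_0, q_1$ sits on the opposite side of $w$ has no genuine variables, so after using the constant-shift freedom on $g_0$'s decomposition to zero it out, the reassembled triple is a disconnected $x$-decomposition of $f$. Its blocks are perfect recall: a perfect-recall polynomial times a fresh literal $w$ or $\bar{w}$ is perfect recall (via a trivial $w$-decomposition), and the last block is itself a disconnected $w$-decomposition $w q_2 + \bar{w} g_1 + g_2$ with perfect-recall components, hence perfect recall by definition. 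The cases $x \in g_1$ and $x \in g_2$ are symmetric, the latter leaving the $x$-split entirely inside $g_2$ and recombining the untouched $g_0, g_1$ as a disconnected $w$-decomposition. Uniqueness up to a constant then transfers the conclusion from this decomposition to the arbitrary one in the statement.
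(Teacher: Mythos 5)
Your high-level strategy is sound and genuinely different from the paper's. The paper proves a three-case comparison lemma relating the arbitrary disconnected $x$-decomposition directly to the witnessing decomposition, and feeds its induction through $xf_0 + \bar{x}f_1 \equiv g_2$ in the third case; you instead prove uniqueness of disconnected $x$-decompositions up to a constant shift (your argument for this is correct, and it is more explicit about the constant-shift freedom than the paper's lemma), note that perfect recall is preserved by constant shifts and by sums over disjoint variable sets, and thereby reduce to exhibiting a single disconnected $x$-decomposition with perfect-recall parts, built by pushing the induction hypothesis inside the block of the witnessing decomposition that contains $x$. The reduction, the two preservation facts, and the case where $x$ is a variable of $g_2$ are all fine.

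However, there is a genuine gap at exactly the step you flagged, and the fix you propose does not close it. In the case where $x$ is a variable of $g_0$, zeroing out $q_1$ only empties the middle block; the third block $wq_2 + \bar{w}g_1 + g_2$ still contains the variable $w$ (through both $wq_2$ and $\bar{w}g_1$), and so does the first block $x(wq_0)$, so the reassembled triple is still not disconnected. What you additionally need --- and what is in fact forced by the hypotheses --- is that $g_1$ and $q_2$ are equal constants, so that $wq_2 + \bar{w}g_1$ collapses to a constant and $w$ survives only in the first block. This does follow from the disconnectedness of the given $x$-decomposition of $f$: the coefficient of $x$ in the expansion of $f$ is $w$ times the coefficient of $x$ in $g_0$, which places $w$ in $X_0$ (up to swapping the roles of $q_0$ and $q_1$); then any term of $f$ pairing $w$ with a variable of $g_1$ or of $q_2$ would have to sit inside $f_1$ or $f_2$ and hence straddle two cells of $X_0, X_1, X_2$, and a residual pure-$w$ term $w(q_2 - g_1)$ would place $w$ in $X_1 \cup X_2$ as well. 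But this is an extra argument of the same kind as your treatment of $q_1$, not a consequence of it; it is essentially the content of cases (1)--(2) of the paper's comparison lemma, where $xf_0 \equiv wg_0$, $\bar{x}f_1 \equiv \bar{w}g_1$ and $f_2 \equiv g_2$, i.e.\ the witnessing decomposition collapses onto the given one. Once you establish this collapse, your proof goes through.
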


 We will first prove the above proposition
through some intermediate observations. The lemma below follows by
definition of perfect recall polynomials and the relation $\equiv$
between polynomials.

\begin{lemma}
  A polynomial $f$ has perfect recall iff its full expansion has
  perfect recall.
\end{lemma}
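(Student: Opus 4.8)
The plan is to show that the property of having perfect recall is invariant under the equivalence relation $\equiv$, and then apply this invariance to the pair $f$ and its full expansion $f'$, which satisfy $f \equiv f'$ by definition of the full expansion.

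First I would record the crucial observation that the notion of an $x$-decomposition depends on a polynomial only through its full expansion: by definition, an expression $g = x f_0(X_0) + \bar{x} f_1(X_1) + f_2(X_2)$ is an $x$-decomposition of $f$ precisely when expanding all complements in $g$ and in $f$ yields the same complement-free polynomial. Hence if $f \equiv h$, that is, $f$ and $h$ have the same full expansion, then $g$ is an $x$-decomposition of $f$ if and only if it is an $x$-decomposition of $h$; in particular $f$ and $h$ admit exactly the same disconnected $x$-decompositions for every variable $x$. I would also note that passing to the full expansion does not alter the underlying variable set, so $\equiv$-equivalent polynomials lie in $M(X)$ for the same $X$.

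Next I would prove, by induction on the number of variables $|X|$, that for $f, h \in M(X)$ with $f \equiv h$, the polynomial $f$ has perfect recall if and only if $h$ does. For the base case $|X| \le 1$, both polynomials have perfect recall by the base clause of the definition (and trivially when there are no variables). For the inductive step, suppose $f$ has perfect recall; then there is a variable $x$ and a disconnected $x$-decomposition $x f_0(X_0) + \bar{x} f_1(X_1) + f_2(X_2)$ of $f$ in which each $f_i(X_i)$ has perfect recall. By the observation above, this same expression is a disconnected $x$-decomposition of $h$, and since the $f_i$ are literally the same polynomials they still have perfect recall; hence $h$ has perfect recall. The converse is symmetric.

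Finally, specializing this invariance to $f$ and $f'$, for which $f \equiv f'$ holds by definition, yields the stated equivalence. The only delicate point is the first observation, namely that $x$-decompositions and the disconnectedness condition are properties of the $\equiv$-class rather than of a particular syntactic representation; once that is isolated, the induction requires no further reference to the $f_i$ beyond what the shared decomposition already supplies, so the argument goes through by unwinding the definitions. I do not anticipate a genuine obstacle beyond this bookkeeping.
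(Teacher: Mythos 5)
Your proof is correct and takes essentially the paper's route: the paper gives no detailed argument, saying only that the lemma ``follows by definition of perfect recall polynomials and the relation $\equiv$'', and what you make explicit is exactly that observation --- an $x$-decomposition, its disconnectedness, and its component polynomials depend only on the $\equiv$-class, so any witness for $f$ transfers verbatim to its full expansion (your induction is harmless scaffolding, since the witness already carries its own perfect-recall parts). The one blemish is the side claim that taking the full expansion never alters the variable set --- cancellation can remove variables, e.g.\ $xy + x\bar{y} \equiv x$ --- but your argument never really uses this beyond transferring the witness, so the proof stands.
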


\begin{corollary}
  \label{cor:equiv-polynomials-perfect}
  Let $f, g$ be polynomials such that $f \equiv g$. Then $f$ has
  perfect recall iff $g$ has perfect recall.
\end{corollary}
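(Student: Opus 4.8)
The plan is to derive the corollary directly from the preceding lemma (that a polynomial has perfect recall iff its full expansion does), by unwinding the definition of $\equiv$. The one fact that makes everything go through is that $f \equiv g$ means, \emph{by definition}, that eliminating the negations from $f$ and from $g$ yields the \emph{same} complement-free polynomial. So $f$ and $g$ do not merely have related expansions; they share one common full expansion, which I would name $h$.

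Concretely, I would proceed as follows. First, let $h$ be the full expansion obtained by eliminating all complements from $f$; since $f \equiv g$, the same $h$ is also the full expansion of $g$. Next, I apply the preceding lemma to $f$, giving that $f$ has perfect recall iff $h$ has perfect recall. Applying the same lemma to $g$ gives that $g$ has perfect recall iff $h$ has perfect recall. Chaining these two biconditionals through the common middle term $h$ yields that $f$ has perfect recall iff $g$ has perfect recall, which is exactly the claim.

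Since the substantive content is already packaged in the preceding lemma, there is essentially no computational obstacle here; the corollary is a transitivity argument. The only point I would be careful to articulate is the well-definedness used implicitly above: a polynomial in $M(X)$ has a \emph{unique} complement-free full expansion, so that the phrase ``the same full expansion'' in the definition of $\equiv$ genuinely designates a single $h$ to which the lemma can be applied on both the $f$ side and the $g$ side. This canonicity of the full expansion is precisely what was already observed to make $\equiv$ an equivalence relation, so no further work is needed, and the proof reduces to the two applications of the lemma together with transitivity of ``iff''.
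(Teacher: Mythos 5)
Your proof is correct and matches the paper's intent exactly: the corollary is stated there as an immediate consequence of the preceding lemma, with precisely the argument you give (the common full expansion $h$, two applications of the lemma, and transitivity of the biconditional). Your extra remark on the uniqueness of the complement-free expansion is the same observation the paper relies on when noting that $\equiv$ is an equivalence relation, so nothing is missing.
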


\begin{lemma}
  \label{lem:relating-disconnected-decomp}
  Let $x f_0(X_0) + \bar{x} f_1(X_1) + f_2(X_2)$ and
  $y g_0(Y_0) + \bar{y} g_1(Y_1) + g_2(Y_2)$ be two disconnected
  decompositions of $f$. Then:
  \begin{enumerate}
  \item either $xf_0 \equiv yg_0$, $\bar{x}f_1 \equiv \bar{y}g_1$ and
    $f_2 \equiv g_2$,
  \item or $xf_0 \equiv \bar{y} g_1$, $\bar{x} f_1 \equiv y g_0$ and
    $f_2 \equiv g_2$,
  \item or $xf_0 + \bar{x} f_1 \equiv g_2$ and
    $f_2 \equiv y g_0 + \bar{y} g_1$
  \end{enumerate}
\end{lemma}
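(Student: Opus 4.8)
The plan is to work entirely with full expansions. Write $E(f)$ for the complement-free multilinear polynomial obtained by expanding all complements in $f$, so that $f \equiv g$ means exactly $E(f) = E(g)$, and write $E(f) = \sum_{S \subseteq X} c_S \prod_{v \in S} v$ as a sum over subsets $S$ of variables. The first step is to record how a disconnected $x$-decomposition $x f_0(X_0) + \bar{x} f_1(X_1) + f_2(X_2)$ is reflected in the coefficients $c_S$. Expanding $\bar{x} f_1 = f_1 - x f_1$ and using that $X_0, X_1, X_2$ are pairwise disjoint and omit $x$, one obtains for every nonempty monomial $S$ with $x \notin S$ a clean trichotomy: (i) $S \subseteq X_0$ iff $S$ occurs only together with $x$, i.e. $c_S = 0$ and $c_{S \cup \{x\}} \neq 0$; (ii) $S \subseteq X_1$ iff $S$ occurs in a \emph{toggled pair}, $c_{S \cup \{x\}} = -c_S \neq 0$; (iii) $S \subseteq X_2$ iff $S$ occurs only without $x$, $c_S \neq 0$ and $c_{S \cup \{x\}} = 0$. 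Disconnectedness is exactly what guarantees that these three cases are exhaustive and mutually exclusive and that every monomial lies wholly in one block. Consequently the partition $(X_0, X_1, X_2)$ and the nonconstant parts of $f_0, f_1, f_2$ are \emph{intrinsic}: they can be read off from $E(f)$ and the pivot $x$ alone.

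The only remaining freedom is in the constant terms, since shifting the constants of $f_0, f_1, f_2$ by $(t, t, -t)$ leaves $E(f)$ unchanged. I would remove this ambiguity by adopting the normalization that $f_1$ (and symmetrically $g_1$) carries no constant term, which makes a disconnected decomposition unique for its pivot; the trichotomy above then pins the constant of $f_0$ to the coefficient $c_{\{x\}}$ and that of $f_2$ to $c_\emptyset$. This constant-term bookkeeping is the one delicate point: the equivalences $\equiv$ in the three cases are stated about full expansions, so they are sensitive to the constant shift, and the clean statement holds only once such a convention is fixed.

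With the blocks pinned down, I would finish by a case analysis on the two pivots. If $x = y$, the intrinsic characterization forces $X_i = Y_i$ and $f_i \equiv g_i$, which is case (1). If $x \neq y$, I would locate $y$ among $X_0, X_1, X_2$ and, symmetrically, $x$ among $Y_0, Y_1, Y_2$, using that requiring \emph{both} decompositions to be disconnected severely restricts the admissible pairs of block-positions. Concretely, when no monomial contains both $x$ and $y$ (mutual independence, $y \in X_2$ and $x \in Y_2$), the entire $x$-dependent part is $y$-free and conversely, yielding case (3), $x f_0 + \bar{x} f_1 \equiv g_2$ and $f_2 \equiv y g_0 + \bar{y} g_1$; when $x$ and $y$ co-occur, disconnectedness forces the monomials through $x$ and $y$ to factor as products, and matching the signs (whether $y$ falls in the positive block $X_0$ or the $\bar{x}$-block $X_1$) lands us in case (1) or, after a $\bar{x}/\bar{y}$ sign flip, case (2); the small examples $f \equiv xy$ and $f \equiv x\bar{y}$ already exhibit cases (1) and (2) respectively. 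I expect the main obstacle to be precisely this last step: proving exhaustiveness of the admissible pivot/block configurations and ruling out the superficially possible but inconsistent ones, while carrying the constant-term accounting through so that the coefficient identities become the exact equivalences claimed.
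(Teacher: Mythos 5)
Your route is the paper's route: full expansion, canonical blocks read off from coefficient patterns, then a case analysis on where each pivot sits in the other decomposition's blocks (the paper's proof is exactly this case analysis, with the details dismissed as ``routine analysis of the terms in the full expansion''), and your trichotomy is a correct and clean way to make that analysis precise. The genuine gap is the one step you assert with confidence: mutual independence does \emph{not} yield case (3) as stated --- in fact the lemma as printed is false there. Take $f = xu + \bar{x}w + yv + \bar{y}s + z$. Then $x\cdot u + \bar{x}\cdot w + (yv + \bar{y}s + z)$ and $y\cdot v + \bar{y}\cdot s + (xu + \bar{x}w + z)$ are both disconnected decompositions, and no monomial contains both $x$ and $y$; yet $xf_0 + \bar{x}f_1 = xu + \bar{x}w \not\equiv xu + \bar{x}w + z = g_2$, and cases (1) and (2) fail trivially. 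The culprit is the residue of monomials containing \emph{neither} pivot (here $z$), which can sit in $f_2$ and $g_2$ simultaneously, and which no bookkeeping of constants removes. What is actually true --- and what your trichotomy does prove, and what the induction in Proposition~\ref{prop:perfect-poly-disconnected-decomp} still supports --- is a weakened case (3): there exists a polynomial $h$ over variables outside $\{x,y\} \cup X_0 \cup X_1 \cup Y_0 \cup Y_1$ such that $g_2 \equiv xf_0 + \bar{x}f_1 + h$ and $f_2 \equiv yg_0 + \bar{y}g_1 + h$. To be fair, the paper's own proof has the identical defect (its case (c) is claimed to entail statement (3) by routine analysis), so you have faithfully reproduced the paper's sketch, bug included; but a proof of the statement as written cannot be completed.

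Separately, your constant-term diagnosis is correct but your cure is not. Under your own normalization (expansion of $f_1$ and of $g_1$ constant-free), take the expansion $xy - x - y$: the unique normalized disconnected decompositions are $x\cdot(-1) + \bar{x}\cdot(-y) + 0$ and $y\cdot(-1) + \bar{y}\cdot(-x) + 0$, and again none of (1)--(3) holds. The representatives that do witness case (1) are $x\cdot 0 + \bar{x}\bar{y} + (-1)$ and $y\cdot 0 + \bar{y}\bar{x} + (-1)$, i.e., the floating constant must be pushed into $f_2$, not pulled out of $f_1$; so cases (1) and (2) should themselves be stated up to additive constants (equivalently, for the nonconstant parts only), rather than repaired by fixing one normalization in advance. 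In short: your canonicity analysis is sound and coincides with what the paper leaves implicit; the failure is in the final case bookkeeping, and it lies as much in the lemma's statement as in your proof of it.
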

\begin{proof} When $x = y$, we can show the first statement of the
  lemma. When $x \neq y$, we need to consider the following cases: (a)
  $y \in X_0$ and $x \in Y_0$, (b) $y \in X_1$ and $x \in Y_0$ and (c)
  $y \in X_2$ and $x \in Y_2$. The other cases are either symmetric or
  impossible. Cases (a), (b), (c) entail the first, second or third
  statements of the lemma respectively. Proof proceeds by routine
  analysis of the terms in the full expansion of $f$.
\end{proof}

\paragraph*{Proof of
  Proposition~\ref{prop:perfect-poly-disconnected-decomp}.}
Proof proceeds by induction on the number of variables. When there is
a single variable, the proposition is trivially true. Consider
polynomial $f$ over multiple variables. Since it has perfect recall,
there is a disconnected decomposition
$y g_0(Y_0) + \bar{y} g_1(Y_1) + g_2(Y_2)$ such that $g_0, g_1, g_2$
have perfect recall. Lemma \ref{lem:relating-disconnected-decomp}
gives the three possible relations between the two decompositions
$x f_0 + \bar{x} f_1 + f_2$ and $y g_0 + \bar{y} g_1 + g_2$. For cases
(1) and (2), we make use of
Corollary~\ref{cor:equiv-polynomials-perfect} to conclude the
proposition. For case (3), we have $x f_0 + \bar{x} f_1 \equiv g_2$
and $f_2 \equiv y g_0 + \bar{y} g_1$. It is easy to see that $f_2$ has
perfect recall since $g_0$ and $g_1$ have perfect recall. Let
$f' = xf_0 + \bar{x} f_1$. We know that $f'$ has perfect recall, has
fewer variables than $f$ and $x f_0 + \bar{x} f_1$ is a disconnected
decomposition of $f'$. By induction hypothesis, $f_0$ and $f_1$ have
perfect recall. \qed

Note proposition~\ref{prop:perfect-poly-disconnected-decomp} claims that ``every'' disconnected
decomposition is a witness to perfect recall. This way the question of
detecting perfect recall boils down to finding disconnected
decompositions recursively.

\subsubsection*{Finding disconnected decompositions.}

The final step is to find disconnected decompositions.  Given a
polynomial $f$ and $b \in \{0, 1\}$, we say \emph{$x$ cancels $y$ with
  $b$} if substituting $x = b$ in $f$ results in a polynomial without
$y$-terms (neither $y$ nor $\bar{y}$ appears after the
substitution). For a set of variables $S$, we say $x$ cancels $S$ with
$b$ if it cancels each variable in $S$ with $b$. We say that $x$
cancels $y$ if it cancels it with either $0$ or $1$.

\begin{lemma}
  \label{lem:cancellation-not-both-0-and-1}
  Let $f$ be an arbitrary polynomial and $x, y$ be variables. Variable
  $x$ cannot cancel $y$ with both $0$ and $1$ in $f$.
\end{lemma}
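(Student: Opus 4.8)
The plan is to exploit the fact that polynomials in $M(X)$ are \emph{multilinear}, so that $f$ has degree at most one in $x$. Since no term of $f$ contains both $x$ and $\bar{x}$ and every variable occurs with degree one, I can group the terms of $f$ according to whether they contain $x$, contain $\bar{x}$, or contain neither, and write
\[
  f \;=\; x\,A \;+\; \bar{x}\,B \;+\; C,
\]
where $A,B,C$ are polynomials mentioning neither $x$ nor $\bar{x}$ (this is exactly the shape of an $x$-decomposition, with $C$ collecting the $x$-free part). Using $\bar{x} = 1-x$, this is the same as $f = x\,(A-B) + (B+C)$, i.e. $f = x\,g + h$ with $g := A-B$ and $h := B+C$ both free of $x$. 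This linear-in-$x$ form is the crux of the argument.

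Next I would read off the two substitutions from this form. Recalling that setting $x=0$ forces $\bar{x}=1$ and setting $x=1$ forces $\bar{x}=0$, I get $f|_{x=0} = B+C = h$ and $f|_{x=1} = A+C = g+h$. By definition, $x$ cancels $y$ with $0$ exactly when $f|_{x=0}$ has no $y$-terms, and $x$ cancels $y$ with $1$ exactly when $f|_{x=1}$ has no $y$-terms. Here I would first make precise what ``has no $y$-terms'' means: because the complement notation is not canonical (for instance $x+\bar{x}\equiv 1$ does not depend on $x$ at all), I read ``$y$ does not appear'' as genuine functional independence, equivalently that $y$ occurs in no monomial of the full expansion. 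With this reading, the property of being $y$-free is preserved under addition and subtraction of polynomials.

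The claim then follows by a one-line cancellation. Suppose $x$ cancels $y$ with both $0$ and $1$. Then both $h$ and $g+h$ are $y$-free, hence so is $g = (g+h)-h$, and therefore $f = x\,g + h$ is $y$-free as well; that is, $y$ does not occur in $f$ at all. Contrapositively, whenever $y$ actually occurs in $f$ --- the only situation in which cancelling $y$ is meaningful --- variable $x$ cannot cancel $y$ with both $0$ and $1$.

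The one subtlety to be careful about is exactly the well-definedness issue flagged above, and it is also where multilinearity is indispensable. The statement is simply false for genuinely arbitrary polynomials: $x(1-x)\,y$ vanishes at both $x=0$ and $x=1$ yet depends on $y$, so such a polynomial would ``cancel'' $y$ with both values. The reason this counterexample lies outside $M(X)$ is precisely that it has degree two in $x$ (equivalently, the term $x\bar{x}$ pairs a variable with its complement). Thus the essential use of $f \in M(X)$ is that its degree-one dependence on $x$ lets me recover $f = x\,g+h$ by linear interpolation between its values at $x=0$ and $x=1$; once both endpoints are $y$-free, the whole polynomial is.
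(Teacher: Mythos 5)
Your proof is correct, and there is essentially nothing in the paper to compare it against: the lemma is stated without any proof, and the only justification the paper ever offers is the remark ``This claim can be easily shown'' when the claim is re-used inside the proof of the lemma characterizing disconnected decompositions. Your interpolation argument is presumably the intended ``easy'' proof: since $f \in M(X)$ is multilinear and no term pairs $x$ with $\bar{x}$, one can write $f = xA + \bar{x}B + C = xg + h$ with $g, h$ free of $x$; then $f|_{x=0} = h$ and $f|_{x=1} = g + h$, and if both endpoints are $y$-free, so are $g = (g+h)-h$ and hence $f = xg + h$. Two of your side observations genuinely sharpen the paper's statement. First, ``arbitrary polynomial'' cannot be taken literally: your counterexample $x\bar{x}y$, which has degree two in $x$, cancels $y$ with both $0$ and $1$, so membership in $M(X)$ is essential, and your proof isolates exactly where it is used. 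Second, the lemma as written fails vacuously when $y$ does not occur in $f$ at all, since then both substitutions are trivially $y$-free; the statement you actually prove --- that cancellation with both $0$ and $1$ forces $y$ not to occur in $f$ --- is the correct formulation, and it suffices for the paper's later application, where the variables under consideration do occur in the polynomial. Finally, your semantic reading of ``no $y$-terms'' (no $y$ in the full expansion) is the right one: it is the reading under which the paper's own corollary, that cancellation is invariant under passing to the full expansion, makes the notion well defined.
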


\begin{lemma}
  Let $f \in M(X)$ and let $g$ be the polynomial obtained by rewriting
  every $\bar{x}t$ by $t - tx$. Then, $x$ cancels $y$ with $b$ in $f$
  iff $x$ cancels $y$ with $b$ in $g$.
\end{lemma}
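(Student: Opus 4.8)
The plan is to recognize that the rewriting which produces $g$ is nothing more than the substitution $\bar{x} := 1-x$, and that this substitution commutes with the evaluation $x = b$ appearing in the definition of cancellation. First I would observe that because $f \in M(X)$, no monomial of $f$ contains both $x$ and $\bar{x}$; hence in any term of the form $\bar{x}t$ the remaining factor $t$ is a monomial over the other variables, free of both $x$ and $\bar{x}$. Consequently the rule $\bar{x}t \mapsto t - tx$ replaces each occurrence of $\bar{x}$ by $1-x$, and the resulting polynomial $g$ contains no occurrence of $\bar{x}$ whatsoever (every term of $g$ is either free of $x$ or linear in $x$). In short, $g = f[\,\bar{x} := 1-x\,]$, a polynomial over $\{x\}\cup(Y\cup\bar{Y})$.

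The key step is then to pin down what ``substituting $x=b$'' means in $M(X)$. Since $\bar{x}$ denotes the complement $1-x$, evaluating at $x=b$ simultaneously sets $x \mapsto b$ and $\bar{x} \mapsto 1-b$. With this convention I would compute
\[
  g|_{x=b} \;=\; f[\,\bar{x}:=1-x\,]\big|_{x=b} \;=\; f[\,x:=b,\ \bar{x}:=1-b\,] \;=\; f|_{x=b},
\]
where the evaluation of $g$ is the ordinary one since $g$ has no $\bar{x}$. Thus $f|_{x=b}$ and $g|_{x=b}$ are \emph{literally the same} polynomial over the remaining variables. From here the lemma is immediate: a monomial containing $y$ or $\bar{y}$ survives in $f|_{x=b}$ exactly when it survives in $g|_{x=b}$, so the evaluated polynomial is free of $y$-terms in the one case iff it is free of $y$-terms in the other. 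This is precisely the claim that $x$ cancels $y$ with $b$ in $f$ iff $x$ cancels $y$ with $b$ in $g$.

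There is no genuine combinatorial obstacle here; the only point requiring care is fixing the substitution convention, namely that setting $x=b$ in a polynomial of $M(X)$ also forces $\bar{x}=1-b$. Without this convention the statement is simply false: for $f = \bar{x}y$ we have $g = y - xy$, and $g|_{x=1}=0$ cancels $y$ while treating $\bar{x}$ as an independent formal variable would leave $\bar{x}y$ intact and fail to cancel. I would therefore state the convention explicitly at the outset, after which the commutation of the two substitutions makes the proof a routine verification that the two evaluated expressions coincide term by term.
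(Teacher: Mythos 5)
The paper states this lemma bare, with no proof supplied, so there is no official argument to compare against; judged on its own merits, your proof is correct, and it is surely the argument the authors intended. The two points you isolate are exactly the content of the lemma: since $f \in M(X)$ is multilinear and no term contains both $x$ and $\bar{x}$, the rewriting $\bar{x}t \mapsto t - tx$ is literally the substitution $\bar{x} := 1-x$; and since the paper defines $\bar{x}$ to \emph{be} the complement $1-x$ rather than an independent indeterminate, evaluating at $x = b$ sends $x \mapsto b$ and $\bar{x} \mapsto 1-b$, so evaluation commutes with the rewriting. Concretely, writing $f = xA + \bar{x}B + C$ with $A, B, C$ free of $x$ and $\bar{x}$, both $f|_{x=b}$ and $g|_{x=b}$ equal $bA + (1-b)B + C$ after collecting like terms, so exactly the same monomials in $y$ and $\bar{y}$ survive in each, and the two cancellation conditions coincide. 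Your closing remark about the substitution convention is well placed rather than pedantic: that convention is precisely where the paper's definition $\bar{x} = 1-x$ does the work, and your example $f = \bar{x}y$ correctly shows the statement would be false if $\bar{x}$ were treated as a free variable under the evaluation $x = b$.
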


\begin{corollary}
  For $b \in \{0, 1\}$ and $x,y \in X$, we have $x$ cancels $y$ with
  $b$ in $f$ iff $x$ cancels $y$ with $b$ in the full expansion of
  $f$.
\end{corollary}

\begin{corollary}
  \label{cor:equiv-polynomials-same-cancellation}
  Let $f, g$ be polynomials such that $f \equiv g$. Then for
  $b \in \{0, 1\}$ and $x \in X$, we have
  $\{ y \mid x \text{ cancels } y \text{ with } b \text{ in } f \}$
  equal to
  $\{ y \mid x \text{ cancels } y \text{ with } b \text{ in } g \}$
\end{corollary}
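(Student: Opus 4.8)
The plan is to reduce the statement directly to the immediately preceding corollary, which asserts that cancellation is invariant under passing to the full expansion: for any $x, y \in X$ and $b \in \{0,1\}$, $x$ cancels $y$ with $b$ in $f$ iff $x$ cancels $y$ with $b$ in the full expansion of $f$. The crucial point is that the full expansion is a canonical representative of the $\equiv$-equivalence class; indeed, by the definition of $\equiv$, the hypothesis $f \equiv g$ says precisely that $f$ and $g$ eliminate their negations to the \emph{same} complement-free polynomial, i.e.\ they share a common full expansion.

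Concretely, let $h$ denote this common full expansion, so that $h$ is the full expansion of $f$ and simultaneously the full expansion of $g$. I would fix an arbitrary variable $y \in X$ and a bit $b \in \{0,1\}$. Applying the preceding corollary to $f$ gives that $x$ cancels $y$ with $b$ in $f$ iff $x$ cancels $y$ with $b$ in $h$. Applying the very same corollary to $g$ gives that $x$ cancels $y$ with $b$ in $g$ iff $x$ cancels $y$ with $b$ in $h$. Chaining these two equivalences through the common polynomial $h$ yields that $x$ cancels $y$ with $b$ in $f$ iff $x$ cancels $y$ with $b$ in $g$.

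Since $y$ and $b$ were arbitrary, membership in the two sets $\{ y \mid x \text{ cancels } y \text{ with } b \text{ in } f \}$ and $\{ y \mid x \text{ cancels } y \text{ with } b \text{ in } g \}$ coincides for every $y$, which is exactly the claimed equality of sets. I do not anticipate any genuine obstacle here: the argument is just two invocations of the prior corollary glued by transitivity of equality of full expansions. All the substantive content — namely that the substitution $x = b$ commutes correctly with expanding out the complements $\bar{x} = 1 - x$ — has already been packaged into that preceding corollary, so the present statement is essentially bookkeeping that lifts a variable-by-variable equivalence to an equality of the corresponding cancellation sets.
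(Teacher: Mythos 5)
Your proof is correct and matches the paper's intended argument: the paper states this as an unproved corollary precisely because it follows, exactly as you argue, by applying the preceding corollary to both $f$ and $g$ and chaining through their common full expansion (which is what $f \equiv g$ means by definition). Nothing is missing.
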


\begin{lemma}
  Let $x f_0(X_0) + \bar{x} f_1(X_1) + f_2(X_2)$ be an
  $x$-decomposition of $f$. Then, the decomposition is disconnected
  iff for $b \in \{0,1\}$, $X_b$ equals $\{y \mid x \text{ cancels $y$
    with $b$ in $f$} \}$.
\end{lemma}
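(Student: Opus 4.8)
The plan is to prove the biconditional by proving each direction, starting from the definition of disconnectedness (that $X_0, X_1, X_2$ are pairwise disjoint) and relating it to the cancellation characterization. Write $f$ in the form of its $x$-decomposition $x f_0(X_0) + \bar{x} f_1(X_1) + f_2(X_2)$. The key idea is to understand what substituting $x = b$ does to this expression: substituting $x = 1$ kills the $\bar{x} f_1$ term (since $\bar{x} = 1-x$ becomes $0$) and turns $x f_0$ into $f_0$, leaving $f_0(X_0) + f_2(X_2)$; substituting $x = 0$ symmetrically leaves $f_1(X_1) + f_2(X_2)$. So $x$ cancels $y$ with $1$ exactly when $y$ disappears from $f_0(X_0) + f_2(X_2)$, and cancels $y$ with $0$ exactly when $y$ disappears from $f_1(X_1) + f_2(X_2)$.

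The forward direction ($\Leftarrow$, assuming the cancellation condition gives disconnectedness) and the backward direction ($\Rightarrow$) both hinge on the following observation. A variable $y$ appears in $f_2(X_2)$ iff it fails to be cancelled by \emph{both} $x=0$ and $x=1$, because $f_2$ survives both substitutions; conversely a variable in $X_0 \setminus X_2$ survives only the $x=1$ substitution. The plan is to make this precise: I would show that the set $\{y \mid x \text{ cancels } y \text{ with } 1\}$ equals the set of variables absent from $f_0 + f_2$, and by Lemma~\ref{lem:cancellation-not-both-0-and-1} (a variable cannot be cancelled with both $b=0$ and $b=1$) the two cancellation sets $X_0' := \{y \mid x \text{ cancels } y \text{ with } 0\}$ and $X_1' := \{y \mid x \text{ cancels } y \text{ with } 1\}$ are disjoint. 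Then I would verify, assuming disconnectedness, that $X_b = X_b'$: if the decomposition is disconnected then $X_0, X_1, X_2$ are pairwise disjoint, so a variable in $X_1$ appears only in $\bar{x} f_1$, hence disappears under $x=1$, giving $X_1 \subseteq X_1'$; the reverse inclusion uses that a variable surviving $x=0$ but not $x=1$ can only sit in the $\bar{x} f_1$ part.

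For the converse direction I would argue the contrapositive: if the decomposition is \emph{not} disconnected, then some variable $y$ lies in two of the three sets $X_0, X_1, X_2$, and I would show this forces $y \notin X_b'$ for the relevant $b$, breaking the equality $X_b = \{y \mid x \text{ cancels } y \text{ with } b\}$. Concretely, if $y \in X_0 \cap X_1$ then $y$ appears in both $f_0$ and $f_1$, so it survives $x=1$ (through $f_0$) and survives $x=0$ (through $f_1$), hence $x$ cancels $y$ with neither value, so $y \notin X_0'$ although $y \in X_0$; the cases $y \in X_0 \cap X_2$ and $y \in X_1 \cap X_2$ are handled similarly using that $f_2$ survives both substitutions.

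\textbf{The main obstacle} I anticipate is the subtle point of \emph{cancellation by coincidence}: a variable $y$ could appear in, say, both $f_0$ and $f_2$, yet its $y$-terms might algebraically cancel after the substitution $x=1$, making $y$ disappear even though it is not syntactically disconnected. To rule this out cleanly I would lean on Corollary~\ref{cor:equiv-polynomials-same-cancellation} and work with the full expansion, where no such spurious cancellation can occur because distinct monomials in the complement-free expansion are linearly independent; this lets me equate ``$y$ survives the substitution'' with ``$y$ syntactically appears in the surviving part,'' and the rest of the argument is a routine case analysis over which of $X_0, X_1, X_2$ contain $y$.
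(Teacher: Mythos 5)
Your reading of the substitutions and your forward direction (disconnected implies the cancellation characterization) are fine and match the paper. The genuine gap is in your converse. You argue contrapositively: pick $y$ lying in two of $X_0, X_1, X_2$ and claim that this very $y$ breaks the equality, e.g.\ ``if $y \in X_0 \cap X_1$ then $y$ survives both substitutions, hence $y \notin X_0'$ although $y \in X_0$.'' That claim is false in general, precisely because of the cross-cancellation you flag at the end. Take $f_0 = yw$, $f_1 = yz$, $f_2 = -yz$, so that $g = xyw + \bar{x}yz - yz$ is an $x$-decomposition of $f \equiv xyw - xyz$ with $y \in X_0 \cap X_1$. Substituting $x = 0$ yields $f_1 + f_2 = 0$: the $y$-terms of $f_1$ are annihilated by those of $f_2$, so $x$ \emph{does} cancel $y$ with $0$, i.e.\ $y \in X_0'$, contrary to your conclusion. (The equality $X_0 = X_0'$ does still fail here, but only through the witness $z \in X_0' \setminus X_0$, which your case analysis never produces; $y$ itself only witnesses $X_1 \neq X_1'$.) Your proposed remedy --- pass to the full expansion and invoke linear independence of \emph{distinct} monomials --- does not repair this: the cancelling monomials $yz$ and $-yz$ sit in different parts but are the \emph{same} monomial, so nothing stops their coefficients from summing to zero after substitution. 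Corollary~\ref{cor:equiv-polynomials-same-cancellation} lets you move cancellation facts between $f$, $g$ and their expansions, but it never converts ``$y$ appears syntactically in $f_i$'' into ``$y$ survives the substitution,'' which is the inference your cases rely on.

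The paper avoids this trap by proving the converse directly rather than contrapositively. Assuming $X_b = \{y \mid x \text{ cancels } y \text{ with } b \text{ in } f\}$ for $b \in \{0,1\}$, Lemma~\ref{lem:cancellation-not-both-0-and-1} immediately gives $X_0 \cap X_1 = \emptyset$ --- this is where the assumed equality does real work, transferring an unconditional property of cancellation sets to $X_0$ and $X_1$. With that disjointness in hand, a variable $y \in X_0 \cap X_2$ yields a contradiction: $y \in X_0$ means $y$ is cancelled with $0$, so the $y$-terms of $f_2$ inside $f_1 + f_2$ must be killed by $y$-terms of $f_1$, forcing $y \in X_1$ and contradicting $X_0 \cap X_1 = \emptyset$; symmetrically for $X_1 \cap X_2$. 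In your contrapositive setup the equality is exactly what you may not assume, so this shield is unavailable. The contrapositive can be salvaged, but only by importing the same machinery --- a subcase split on whether $y$ is cancelled, the forcing argument via cross-cancellation, and the no-both-values lemma --- and by accepting that \emph{which} of the two equalities fails, and through which witness, varies by subcase. It is not the routine syntactic case analysis your sketch describes.
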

\begin{proof}
  Suppose $x f_0(X_0) + \bar{x} f_1(X_1) + f_2(X_2)$ is
  disconnected. Then clearly, the conclusion to the forward
  implication follows. Now consider an $x$-decomposition
  $x f_0(X_0) + \bar{x} f_1(X_1) + f_2(X_2)$ which is not necessarily
  disconnected to start off with. Call this decomposition $g$. Suppose
  $X_b = \{y \mid x \text{ cancels $y$ with $b$ in $f$} \}$. By
  definition $g \equiv f$. From
  Corollary~\ref{cor:equiv-polynomials-same-cancellation}, $f$ and $g$
  have the same cancellations due to $x$. Here we make a claim that a
  variable $x$ cannot cancel $y$ with both $0$ and $1$. This claim can
  be easily shown. This shows that $X_0 \cap X_1 = \emptyset$. We know
  that $x \notin X_2$ by definition of the $x$-decomposition. If some
  $y \in X_0 \cap X_2$ then $y$ cannot get canceled by $x$ with
  respect to $0$ in $g$ and hence also in $f$. This shows that
  $X_0 \cap X_2 = \emptyset$. Similar argument also shows that
  $X_1 \cap X_2 = \emptyset$.
\end{proof}

This lemma provides a mechanism to form disconnected
$x$-decompositions starting from a polynomial $f$, just by finding
variables that get canceled and then grouping the corresponding
terms.

\begin{theorem}
  \label{thm:ptime-algo-perfect-poly}
  There is a polynomial-time algorithm to detect if a polynomial has
  perfect recall.
\end{theorem}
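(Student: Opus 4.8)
The plan is to give a recursive algorithm that mirrors the inductive definition of perfect recall polynomials, using the preceding results to keep each recursive step deterministic and polynomial. On input $f \in M(X)$, represented by its complement-free full expansion, the algorithm proceeds as follows. If $|X| \le 1$, it declares $f$ to have perfect recall. Otherwise it searches for a single variable $x \in X$ that admits a disconnected $x$-decomposition $x f_0(X_0) + \bar{x} f_1(X_1) + f_2(X_2)$; if it finds one, it recurses on $f_0, f_1, f_2$ and answers ``yes'' precisely when all three recursive calls answer ``yes''. If no variable admits a disconnected decomposition, it answers ``no''.

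The correctness of committing to a single $x$ (rather than backtracking over all choices of variable and of decomposition) rests on Proposition~\ref{prop:perfect-poly-disconnected-decomp}. First note that for a fixed $x$ the disconnected $x$-decomposition is essentially unique: by the lemma characterizing disconnectedness via cancellation, its variable sets must be $X_b = \{y \mid x \text{ cancels } y \text{ with } b\}$ for $b \in \{0,1\}$, which are forced by $f$ (and are disjoint by Lemma~\ref{lem:cancellation-not-both-0-and-1}), and $f_0, f_1, f_2$ are then determined by grouping the monomials of $f|_{x=1}$, $f|_{x=0}$ and the $x$-free part accordingly. Now, if $f$ has perfect recall, then by the definition some variable admits a disconnected decomposition, and by Proposition~\ref{prop:perfect-poly-disconnected-decomp} whichever disconnected decomposition the algorithm discovers has all three parts of perfect recall; hence the recursion succeeds. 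Conversely, if $f$ does not have perfect recall, then by the definition no disconnected decomposition can have all parts of perfect recall, so whatever decomposition is chosen at least one recursive call returns ``no''; and if no disconnected decomposition exists at all, the defining condition is vacuously unsatisfiable. Thus the committed choice is sound no matter which valid $x$ is selected.

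It remains to argue that each step runs in polynomial time. Working in the complement-free full expansion, the test ``$x$ cancels $y$ with $b$'' is carried out by substituting $x = b$ and checking whether any surviving monomial contains $y$; by the corollaries preceding the theorem this notion is independent of the representation, so substitution-and-scan computes the correct sets. For a candidate $x$ we compute $X_0, X_1$ as the two cancellation sets, set $X_2 = X \setminus (\{x\} \cup X_0 \cup X_1)$, and verify that every monomial of $f$ respects the induced grouping (equivalently, that these sets yield a genuine disconnected $x$-decomposition); all of this is linear in the size of the expansion, and there are at most $|X|$ candidates to try. Finally, because the recursive subproblems $f_0, f_1, f_2$ range over the pairwise disjoint variable sets $X_0, X_1, X_2$ whose union omits $x$, the recursion tree has at most $|X|$ leaves and hence $O(|X|)$ nodes, while the monomials are partitioned among the subcalls; multiplying the polynomial per-node work by the number of nodes gives an overall polynomial bound.

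The main obstacle is conceptual rather than computational: a naive reading of the definition would require, at each level, guessing both the splitting variable and one of possibly exponentially many decompositions and then backtracking, which is prohibitive. Proposition~\ref{prop:perfect-poly-disconnected-decomp} is exactly what collapses this search, since it certifies that any disconnected decomposition is a faithful witness, so the algorithm may fix the first one it discovers and never reconsider it. The only technical care needed is to detect cancellations robustly and to extract the parts consistently, for which the full-expansion corollaries and Lemma~\ref{lem:cancellation-not-both-0-and-1} suffice.
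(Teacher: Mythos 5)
Your proposal is correct and takes essentially the same approach as the paper: the same recursive procedure that commits to whichever disconnected $x$-decomposition it finds (extracted via the cancellation sets), recurses on $f_0, f_1, f_2$, and invokes Proposition~\ref{prop:perfect-poly-disconnected-decomp} to justify never backtracking over the choice of variable or decomposition. Your version simply spells out details the paper leaves implicit, namely the forced uniqueness of the decomposition for a fixed $x$ and the bound on the recursion tree.
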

\begin{proof}
  Here is the (recursive) procedure.
  \begin{enumerate}
  \item Iterate over all variables to find a variable $x$ such that
    the $x$-decomposition $x f_0(X_0) + \bar{x} f_1(X_1) + f_2(X_2)$
    of $f$ is disconnected. If no such variable exists, stop and
    return \emph{No}.
  \item Run the procedure on $f_0, f_1$ and $f_2$.
  \item Return \emph{Yes}.
  \end{enumerate}
  When the algorithm returns \emph{Yes}, the decomposition witnessing
  the perfect recall can be computed. When the algorithm returns
  \emph{No}, it means that the decomposition performed in some order
  could not be continued. However
  Proposition~\ref{prop:perfect-poly-disconnected-decomp} then says
  that the polynomial cannot have perfect recall.
\end{proof}

The combination of Theorems~\ref{thm:perfect-recall-poly-game} and
\ref{thm:ptime-algo-perfect-poly} gives a heuristic for polynomial
optimization: check if it is perfect recall, if yes convert it into a
game and solve it, if not perform the general algorithm that is
available. This heuristic can also be useful for imperfect recall
games. The payoff polynomial of an imperfect recall game could as well
be perfect recall (based on the values of the payoffs). Such a
structure is not visible syntactically in the game whereas the
polynomial reveals it. When this happens, one could solve an
equivalent perfect recall game.

\section{Pure strategies and bridge}
\label{sec:bounding-randomness}

We have seen that maxmin
computation over behavioural strategies is as hard as solving very generic
optimization problems of multivariate
polynomials over reals. Here we investigate the case of pure
strategies. We first recall the status of the problem.

\begin{theorem}\cite{KollerMegiddo::1992}
  The question of deciding if maxmin value over pure strategies is at
  least a given rational is $\Sigma_2$-complete in two player imperfect
  recall games. It is $\NP$-complete when there is a single player.
\end{theorem}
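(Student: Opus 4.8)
The plan is to treat the two bounds, and within each the one- and two-player cases, separately. \emph{Membership.} A pure strategy assigns one action to each information set, so it is a certificate of size linear in the game. Once pure strategies $\sigma$ for $\Max$ and $\tau$ for $\Min$ are fixed, the residual game $G_{\sigma,\tau}$ contains only chance, and its payoff collapses to $\sum_{t}\Uu(t)\,\Cc(t)$ over the leaves $t$ reachable under $\sigma,\tau$, which is computable in polynomial time. Hence ``$\maxminpure(G)\ge r$'' is literally the sentence $\exists\,\sigma\ \forall\,\tau:\ \payoff(G_{\sigma,\tau})\ge r$ with a polynomial-time matrix, so the two-player problem lies in $\Sigma_2$; deleting the universal quantifier in the single-player case puts it in $\NP$.

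\emph{One-player hardness.} For the matching $\NP$ lower bound I would reuse the polynomial correspondence. Applying Lemma~\ref{lem:polynomial-game} to a multilinear polynomial $F\in M(X)$ yields a one-player game without absentmindedness whose behavioural payoff is $F$ and whose pure payoffs are exactly the values of $F$ at the vertices of $[0,1]^{|X|}$. Since the optimum of $F$ over the hypercube is attained at a vertex, $\maxpure$ of the game equals $\max F$, and deciding whether $\max F$ meets a rational threshold is $\NP$-complete by the quoted result of \cite{KollerMegiddo::1992}. This gives $\NP$-completeness for one player.

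\emph{Two-player hardness.} For $\Sigma_2$-hardness I would reduce from quantified satisfiability $\exists X\,\forall Y\ \bigvee_{j=1}^m D_j(X,Y)$, with the matrix in $3$-DNF (each term $D_j$ a conjunction of at most three literals); this is the $\Sigma_2$-complete form. The game works as follows: $\Min$ first reveals a pure assignment to $Y$; observing $Y$, $\Max$ then selects a term index $j$; a chance node next tests one literal of $D_j$ chosen uniformly; finally, testing an $X$-literal routes to a node of $\Max$ whose information set $I_{x_i}$ pools all occurrences of the test for $x_i$, the payoff being $1$ if $\Max$'s committed action satisfies the literal and $0$ otherwise, while an already-revealed $Y$-literal is tested by a deterministic leaf. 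Pooling $I_{x_i}$ forces $\Max$ to commit a single assignment to each $x_i$ independent of $Y$, and the per-term test makes the expected payoff equal to the fraction of literals of $D_j$ satisfied, which is $1$ exactly when $D_j$ holds. Consequently $\maxminpure(\hat G)\ge 1$ iff $\Max$ can fix $X$ so that, against every $Y$, some term $D_j$ is fully satisfied, i.e.\ iff $\exists X\,\forall Y\ \bigvee_j D_j(X,Y)$ holds.

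\emph{Main obstacle.} The delicate part is making $\max_\sigma\min_\tau$ over \emph{pure} strategies reproduce the alternation $\exists X\,\forall Y$ faithfully, and two points need care. First, the normal form: one must use a $3$-DNF matrix, since $\exists X\,\forall Y$ over a \emph{CNF} collapses into $\NP$ (for fixed $X$, the inner $\forall Y\bigwedge_j C_j$ is polynomial-time checkable), so the inner ``there exists a satisfied term'' has to be realized by letting $\Max$ react to the revealed $Y$ through its term selection $j$. Second, the pooling of $\Max$'s literal tests into one information set per variable is exactly what converts $\Max$'s local actions into a single global, $Y$-independent assignment (in particular $\Max$ forgets $Y$ between the term node and the tests); verifying that this introduces no absentmindedness (the pooled nodes lie on distinct root-to-leaf paths) and that the threshold $r=1$ is separated from the ``no satisfied term'' case by a gap of at least $\frac{1}{3}$ is then routine.
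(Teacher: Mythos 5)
First, a framing remark: the paper does not prove this statement at all --- it is imported verbatim from \cite{KollerMegiddo::1992} --- so your attempt can only be judged against the argument in that reference. Your membership arguments are correct, and your $\Sigma_2$-hardness reduction is sound and essentially reconstructs the known construction: $\Min$ publicly reveals $Y$, $\Max$ reacts by choosing a term of the DNF, chance audits one uniformly chosen literal of that term, and a single pooled information set per variable $x_i$ forces $\Max$'s pure strategy to commit to a $Y$-independent assignment of $X$; the expected payoff is $1$ exactly when the chosen term is fully satisfied and at most $2/3$ otherwise, so $\maxminpure(\hat G)\ge 1$ iff the $\exists\forall$ sentence holds. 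Your observation that the matrix must be a DNF (since $\exists X\,\forall Y$ over a CNF is decidable in $\NP$) is precisely the right subtlety, and the check that the pooled nodes lie on distinct root-to-leaf paths (hence no absentmindedness) is also correct.

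There is, however, one genuine flaw: the one-player $\NP$-hardness is circular as written. You justify the $\NP$-hardness of maximizing a multilinear polynomial over the hypercube by appeal to ``the quoted result of \cite{KollerMegiddo::1992}'', but that statement (Corollary 2.8 there, and the theorem quoted in Section~\ref{sec:polyn-optim} of this paper) is itself derived from the $\NP$-hardness of one-player imperfect-recall games, i.e., from the very claim you are trying to establish. To break the cycle you need a reduction whose source problem is independently hard: either reduce SAT directly into a one-player game (chance picks a clause uniformly at random; the player assigns each variable at a single pooled information set; the leaf payoff is $1$ iff the audited clause is satisfied, so the pure optimum is $1$ iff the formula is satisfiable --- this is in effect the gadget of Proposition 2.6 of \cite{KollerMegiddo::1992} with $\Min$ replaced by chance), or first prove the polynomial-optimization hardness independently (e.g., the standard multilinear encoding of SAT or MAX-2-SAT) and then compose with Lemma~\ref{lem:polynomial-game} exactly as you do. With that substitution, your proof is complete; the rest of the architecture (vertex optimality giving $\maxpure = \max F$, and the $\Sigma_2$ upper bound) stands as is.
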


In this section we refine this complexity result in two ways:
we introduce the chance degree of a game
and show polynomial-time complexity when the chance
degree is fixed; 
next we provide a focus on a tractable class of games called bidding games,
suitable for the study of Bridge.

\subsection{Games with bounded chance}

We investigate a class of games where the $\chance$
player has restrictions. In many natural games, the number of
$\chance$ moves and the number of options for $\chance$ are limited -
for example, in Bridge there is only one $\chance$ move at the very
beginning leading to a distribution of hands. With this intuition, we
define a quantity called the \emph{chance degree} of a game.

\begin{definition}[Chance degree]
  For each node $u$ in the game, the chance degree
  $\cdeg(u)$ is defined as follows: $\cdeg(u) = 1$ if $u$ is a leaf,
  $\cdeg(u) = \sum_{u \to v} \cdeg(v)$ if $u$ is a chance node, and
  $\cdeg(u) = \max_{u \to v} \cdeg(v)$ if $u$ is a control node. The
  chance degree of a game is $\cdeg(r)$ where $r$ is the root. 
\end{definition}

The chance degree in essence expresses the number of leaves reached with
positive probability when players play only pure strategies. For example,
the chance degrees of games $G_2$ (Figure~\ref{fig:game-examples}) and
$G_{-\sqrt{n}}$ (Figure~\ref{fig:sqrt-n}) are $1$ and $2$
respectively.

\begin{lemma}\label{lem:fixed-chance-degree-one-player}
  Let $G$ be a one player game with imperfect recall,  
  chance degree $K$ and $n$ nodes.
    When both players play pure strategies, the number of leaves reached
  is atmost $K$.
   The optimum value over pure strategies can be computed in time
  $\Oo(n^K)$.
\end{lemma}
\begin{proof}
The first statement follows from an induction on the number of non-terminal nodes.

  Partition the set of leaves into bags so that leaves arising out of
  different actions from a common $\chance$ node are placed in
  different bags. Here is an algorithm which iterates over each leaf
  starting from the leftmost till the rightmost, and puts it in a
  corresponding bag. Suppose the algorithm has visited $i$ leaves and
  has distributed them into $j$ bags. For the next leaf $u$, the
  algorithm finds the first bag where there is no $v$ such that the
  longest common prefix in $\pathto(u)$ and $\pathto(v)$ ends with a
  $\chance$ node. If there is no such bag, a new bag is created with
  $u$ in it. It can be shown that the number of bags created is equal
  to the chance degree $K$ of the game.

  In the partitioning above, for every $\chance$ node $u$ and for
  every pair of transitions $u \xra{a} u_1$ and $u \xra{b} u_2$, the
  leaves in the subtrees of $u_1$ and $u_2$ fall in different
  bags. Moreover two leaves differ only due to control nodes and hence
  while playing pure strategies, both these nodes cannot both be reached
  with positive probability.  Therefore, once this partition is created, a pure
  strategy of the player can be seen as a tuple of leaves
  $\langle u_1, \dots, u_m \rangle$ with at most one leaf from each
  bag such that for every stochastic node $u$ which is an ancestor of
  some $u_i$, there is a leaf $u_j$ in the subtree (bag) of every
  child of $u$. The payoff of the strategy is given by the sum of
  $\Cc(t) \Uu(t)$ for each leaf $t$ in the tuple where $\Uu(t)$ is the
  payoff and $\Cc(t)$ is the chance probability to reach $t$. This
  enumeration can be done in $\Oo(n^K)$.
\end{proof}

\begin{theorem}
  \label{thm:fixed-chance-degree}
  Consider games with chance degree bounded by a constant $K$. Optimum
  in the one player case can be computed in polynomial-time. In the
  two player case, deciding if maxmin is at least a rational $\lambda$
  is $\NP$-complete.
\end{theorem}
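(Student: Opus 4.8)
The plan is to prove Theorem~\ref{thm:fixed-chance-degree} in two parts.

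\paragraph*{One-player case.}
The one-player statement is an immediate corollary of Lemma~\ref{lem:fixed-chance-degree-one-player}. First I would observe that when the chance degree $K$ is a fixed constant, the bound $\Oo(n^K)$ on the running time of the enumeration procedure becomes a polynomial in $n$, the number of nodes. Since Lemma~\ref{lem:fixed-chance-degree-one-player} already establishes that the optimum over pure strategies can be computed in time $\Oo(n^K)$, fixing $K$ to a constant directly yields a polynomial-time algorithm. No further argument is required beyond noting that a constant exponent turns the enumeration into a polynomial-time procedure.

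\paragraph*{Two-player case: membership in $\NP$.}
For the two-player case I would argue membership and hardness separately. For membership in $\NP$, the natural certificate is a pure strategy $\s$ for $\Max$. The plan is: guess $\s$, fix it in the game to obtain a one-player game $G_{\s}$ in which only $\Min$ moves, and then verify that $\min_{\t}\payoff(G_{\s,\t}) \ge \lambda$ in polynomial time. The key point is that fixing a pure strategy for $\Max$ does not increase the chance degree of the game, since $\cdeg$ takes a max over children at control nodes and is unaffected by pruning control-node branches. Thus $G_{\s}$ is a one-player game (with $\Min$ as the single player) of chance degree at most $K$, and by the one-player case its optimum over pure strategies of $\Min$ is computable in polynomial time. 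By the result recalled in the theorem preceding this section (from~\cite{KollerMegiddo::1992}), the best response of a player is attained at a pure strategy in imperfect recall games, so computing $\Min$'s pure optimum suffices to certify the value of $\s$. This places the decision problem in $\NP$.

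\paragraph*{Two-player case: $\NP$-hardness.}
For hardness I would reduce from the one-player pure-strategy problem, which is $\NP$-complete by the theorem from~\cite{KollerMegiddo::1992} recalled above, and which remains $\NP$-complete even when restricting to instances with bounded chance degree (indeed, the gadget of Lemma~\ref{lem:polynomial-game} uses a single chance node fanning out to the monomial subtrees, but one can equally encode a SAT/polynomial instance with chance degree bounded by a small constant). A one-player $\Max$ game is a degenerate two-player game in which $\Min$ has no moves, so $\maxminbeh$ and $\maxpure$ coincide and the chance degree is preserved; hence deciding $\maxminpure(G) \ge \lambda$ in bounded-chance-degree two-player games is at least as hard as the one-player pure problem. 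The main obstacle I anticipate is ensuring the hardness reduction genuinely lives within the bounded-chance-degree regime: I must exhibit an $\NP$-hard family of one-player pure-strategy instances whose chance degree is a fixed constant (equivalently, where only a constant number of leaves are reached under pure play), so that the reduction does not secretly exploit unbounded chance. Verifying that the standard encoding can be arranged with constant chance degree is the delicate step; everything else is routine composition of the already-established lemmas.
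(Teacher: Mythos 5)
Your one-player argument and your $\NP$ membership argument are correct and essentially the paper's own: the paper likewise observes that once $\Max$ fixes a pure strategy, the resulting game is a one-player game for $\Min$ solvable in polynomial time via Lemma~\ref{lem:fixed-chance-degree-one-player}. Your explicit remark that pruning $\Max$'s branches cannot increase the chance degree (because $\cdeg$ takes a maximum at control nodes) is a point the paper leaves implicit, and it is a useful one. The side remark about best responses being attained at pure strategies is unnecessary—and not actually stated by the cited theorem—but it is harmless here, since in this section maxmin is by definition taken over pure strategies for both players.

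The hardness argument, however, contains a genuine gap: the reduction you propose cannot work at all. You reduce from the one-player pure-strategy problem restricted to bounded chance degree and claim this restricted problem remains $\NP$-complete. But the first part of the very theorem you are proving (via Lemma~\ref{lem:fixed-chance-degree-one-player}) shows that one-player instances with chance degree bounded by a constant are solvable in polynomial time, so they cannot form an $\NP$-hard family unless $\Ptime = \NP$. Your parenthetical justification rests on a misreading of the chance degree: the gadget of Lemma~\ref{lem:polynomial-game} does have a single $\chance$ node, but $\cdeg$ \emph{sums} over the children of a chance node, so that gadget's chance degree equals the number of monomials, which is unbounded; no one-player SAT-style encoding can avoid this. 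The hardness must instead exploit the second player, whose adversarial branching is free with respect to $\cdeg$ (control nodes contribute a max, not a sum). This is what the paper does: it invokes Proposition 2.6 of \cite{KollerMegiddo::1992}, a two-player gadget with no $\chance$ nodes at all—$\Min$ selects a clause and $\Max$ assigns variables under imperfect recall—which therefore has chance degree $1$ and yields $\NP$-hardness of deciding $\maxminpure \ge \lambda$ even in that regime.
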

\begin{proof}
  Lemma~\ref{lem:fixed-chance-degree-one-player} says that the optimum
  for a single player can be computed in $\Oo(n^K)$ where $n$ is the
  number of nodes. Since $K$ is fixed, this gives us
  polynomial-time. For the two player case, note that whenever $\Max$
  fixes a strategy $\s$, the resulting game is a one player game in
  which $\Min$ can find its optimum in polynomial-time. This gives the
  $\NP$ upper bound. The $\NP$-hardness follows from Proposition 2.6
  of \cite{KollerMegiddo::1992} where the hardness gadget has no
  $\chance$ nodes. Hence hardness remains even if chance degree is $1$.
\end{proof}

Since the two player decision problem is hard even when fixing the
chance degree, we need to look for strong structural restrictions
that can give us tractable algorithms. Perfect recall is of course one
of them. In the subsequent section, we consider a model of the bidding
phase of bridge as an imperfect recall game, and investigate some
abstraction that can guarantee polynomial-time.

 \subsection{A model for Bridge bidding} 
\label{sec:model-bridge-bidding}

We propose a model for the Bridge bidding
phase. We first describe the rules of a game which abstracts the
bidding phase. Then we represent it as a zero-sum extensive
form imperfect recall game.

\subsubsection*{The bidding game. } There are four players $N,S,W,E$ in
this game model, representing the players North, South, West and East
in Bridge. Players $N,S$ are in team $T_{max}$ and $E,W$ are in team
$T_{min}$. For a player $i \in \{N, S, W, E\}$, we write $T_i$ to
denote the team of player $i$ and $T_{\neg i}$ for the other
team. This is a zero-sum game played between teams $T_{max}$ and
$T_{min}$. Every player has the same set of actions $\{0,\dots,n\}$
where $0$ imitates a pass in Bridge and action $j$ signifies that a
player has bid $j$. Each player $i$ has a set $H_i$ of possible
private signals (also called secrets). Let $H =H_N \times H_E \times H_S \times
H_W$. Initially each player $i$ receives a private signal from $H_i$
following a probabilistic distribution $\Delta(H)$ (in Bridge, this
would be the initial hand of cards for each player). The game is
turn-based starting with $N$ and followed by $E,S,W$ and proceeds in
the same order at each round. Each player can play a bid which is
either $0$ or strictly greater than the last played non-zero bid. The
game ends when i) $N$ starts with bid 0 and each of $E,S,W$ also
follow with bid $0$ or ii) at any point, three players consecutively
bid $0$ or iii) some player bids $n$. At the end of the game the last
player to have played a non-zero bid $k$ is called the declarer, with
contract $k$ equal to this bid. It is $0$ if everyone bids 0
initially. The payoff depends on a set of given functions
$\Theta_i: H \mapsto \{0, \dots, m\}$ with $m \le n$ for each player
$i$. The function $\Theta_i(\langle h_N, h_E, h_S, h_W \rangle)$ gives
the optimal bid for player $i$ as a declarer based on the initial
private signal $h$ received. The payoff for the teams $T_{max}$ and
$T_{min}$ are now computed as follows: when $i$ is the declarer with
contract $k$ and $h \in H$ is the initial private signal for $i$, if
$\Theta_i(h) \geq k$, $T_i$ gets payoff $k$ whereas $T_{\neg i}$ gets
$-k$. If $\Theta_i(h) < k$ , $T_i$ gets $-k$ and $T_{\neg i}$ gets
$k$.

As an example of this model consider a game where
$H_E = H_W = \{\bot\}$ and $H_N = H_S =
\{\spadesuit,\diamondsuit\}$. There are four possible combinations of
signals in $H$, and the players receive each of them with probability
$\frac{1}{4}$. Players $E,W$ have trivial private signals known to all
and so $\Theta$ does not depend on their signal. A $\Theta$ function
for $n = 5, m = 4$ is given in Figure~\ref{tab:bridge-game-example}. For
example, when the initial private signal combination is
$(\spadesuit, \bot, \spadesuit, \bot)$ and $N$ is the declarer, then the contract
has to be compared with $4$. For the same secret, if $S$ is the
declarer then the contract has to be compared with $2$. 
The longest possible bid sequence in this game is
$(0,0,0,1,0,0,2,0,0,3,0,0,4,0,0,5)$. Let us demonstrate team payoffs
with a few examples of bid sequences. For the initial private signals
$(\spadesuit,\bot,\spadesuit,\bot)$ and the bid sequence
$(0,1,0,2,4,0,0,0)$, $N$ is the declarer with contract $4$, and
$T_{max}$ and $T_{min}$ get payoff $4$ and $-4$ respectively. On
private signals $(\spadesuit,\bot,\diamondsuit,\bot)$ and the bid
sequence $(2,3,0,0,0)$, $E$ is the declarer with contract $3$ and
$T_{max}$ and $T_{min}$ receive payoffs $3$ and $-3$ respectively.

\subsubsection*{Bidding games in extensive form.}
Given a bidding game with the specifications as mentioned above, we
can build an extensive form game corresponding to it. The root node is
a $\chance$ node with children $H$ and transitions giving $\Delta(H)$.
All the other nodes are control nodes. We consider them to belong to
one of the four players $N, E, S, W$. However finally we will view it
as a zero-sum game played between $T_{max}$ and $T_{min}$. These
intermediate nodes are characterized by sequences of bids leading to
the current state of the play. Let $Seq$ be the set of all possible
sequences of bids from $\{0,\dots,n\}$ due to game play. The set $Seq$
also contains the empty sequence $\epsilon$. The nodes in the
extensive form game are the elements of $Seq$. For each sequence $s$
there is a set of valid next moves which contain $0$ and the bids
strictly bigger than the last non-zero bid in $s$. These are the
actions out of $s$. Leaves are bid sequences which signal the end of
the play. The utility at each leaf is given by the payoff received by
$T_{max}$ at the end of the associated bid sequence.

Finally, we need to give the information sets for each player.  Let
$Seq_i$ be the sequences that end at a node of player $i$.  Each
player observes the bid of other players and is able to distinguish
between two distinct sequences of bids at his turn. But, player $i$
does not know the initial private signals received by the other
players. Hence the same sequence of bids from a secret of $i$ and each
combination of secrets of the other players falls under one
information set. More precisely, let
$\mathcal{H}_i = H_i \times Seq_i$ be the set of histories of player
$i$. Two nodes of player $i$ are in the same information set if they
have the same history in $\mathcal{H}_i$. Note that each individual
player $N, E, W, S$ has perfect recall. When considered as a team,
$T_{max}$ and $T_{min}$ have imperfect recall. The initial signal for
a team is a pair of secrets $(h_N, h_S)$ or $(h_E, h_W)$ and within an
information set of say $N$, there are nodes $u$ and $v$ coming from
different initial signals $(h_N, h_S)$ and $(h_N, h'_S)$. This makes
the game a signal-loss recall for each team. Therefore the only
general upper bound for maxmin computation is $\efr$ with behavioural
strategies and $\Sigma_2$ with pure strategies. Observe that the
chance degree of the game is $|H|$ since there is a single $\chance$
node.  When we bound this initial number of secrets $H$ by some $K$,
and vary the bids and payoff functions, we get a family of games with
bounded chance degree. Theorem~\ref{thm:fixed-chance-degree} gives
slightly better bounds for computing the maxmin over pure strategies for this
family of games, which is still $\NP$-hard for the two-player
case. This motivates us to restrict the kind of strategies considered
in the maxmin computation. We make one such attempt below.

\begin{figure}[t]
  \centering
  \begin{tabular}{|c|c|c|c|c|}
    \cline{2-5}
    \multicolumn{1}{c|}{} & \multicolumn{4}{c|}{$\Theta$} \\
    \hline
    Player & $(\diamondsuit,\diamondsuit)$ & $(\diamondsuit,\spadesuit)$& $(\spadesuit,  \diamondsuit )$ & $(\spadesuit , \spadesuit )$ \\
    \hline
    $N$ & 0 & 0& 2 & 4\\
    \hline
    $E$ & 0& 0 & 0 &  0 \\
    \hline
    $S$ & 0& 2 &0 & 2 \\
    \hline 
    $W$ & 0&0 & 0 & 0\\
    \hline 
  \end{tabular}
  \caption{Example of a bidding game}
  \label{tab:bridge-game-example}
\end{figure}

\begin{figure}[t]
  \begin{tabular}{|c|c|c|c|c|c|c|}
    \cline{2-7}
    \multicolumn{1}{c|}{} & \multicolumn{6}{c|}{$\Theta$} \\
    \hline
    Player & $h_1$ & $h_2$ & $h_3$ & $h_4$  & $h_5$ &$h_6$\\
    \hline
    $N$ & 3 & 4& 5  & 0 &0 &0 \\
    \hline
    $E$ & 1& 3 & 2 &  2 & 2&4\\
    \hline
    $S$ & 0& 0 &0 & 3 & 4 & 5\\
    \hline 
    $W$ & 0&0 & 0 & 0 & 0&0\\
    \hline 
  \end{tabular}
  \caption{A second example of a bidding game}
  \label{tab:bridge-game-example-2}
\end{figure}

\subsubsection*{Non-overbidding strategies.}

A pure strategy for player $i$ is a function
$\sigma_i: \mathcal{H}_i \mapsto \{0,\dots, n\}$. In the example of
Figure~\ref{tab:bridge-game-example}, $N$ has to pass on the
information whether she has $\diamondsuit$ or $\spadesuit$ to $S$, and
in the case that $N$ has $\spadesuit$, player $S$ has to pass back
information whether she has $\diamondsuit$ or $\spadesuit$ so that in
the latter case $N$ can bid for $4$ in the next turn. When $E$ knows
the strategy of $N$, she can try to reduce their payoff by playing $3$
when $N$ plays $2$ (if she bids $4$, her team loses and $T_{max}$ gets
a payoff $4$ anyway) and not let $S$ over-bid to pass information to
$N$. But in the process $E$ ends up overbidding when $S$ has
$\diamondsuit$ and it makes no difference to the total expected payoff. This gives strategies $\s_N(\diamondsuit) = 0$,
$\s_N(\spadesuit) = 2$, $\s_S(\diamondsuit, 0 b_E) = 0$,
$\s_S(\spadesuit, 0 b_E) = 2$ (when possible),
$\s_S(\spadesuit, 2 0) = 3$ and $\s_S(\spadesuit, 2 3) = 0$, where
$b_E$ is a placeholder for some bid of $E$. When it comes back to $N$
for the second turn and $S$ had played $3$, then $N$ plays $4$ if she
can, otherwise she passes. This pair of strategies achieves the maxmin
payoff.

A pure strategy $\sigma_i$ of player $i$ is said to be
\emph{non-overbidding} if starting from her second turn, player $i$
always bids $0$: more precisely, for $h \in H_i$ and $s \in Seq_i$,
$\s_i(h, s) = 0$ whenever there exists $s_0 \in Seq_i$ with $s_0$ a
proper prefix of $s$. Otherwise, the strategy is said to be
\emph{over-bidding}. The strategy of $N$ above is over-bidding since
$N$ could potentially bid $4$ after $2$. The number of non-overbidding
strategies is $|H_N| \cdot (n+1)$ for $N$ and $|H_S| \cdot (n+1)$ for
$S$ and hence for team $T_{max}$ there are
$|H_N| \cdot |H_S| \cdot (n+1)^2$ non-overbidding
strategies. Similarly there are $|H_E| \cdot |H_W| \cdot (n+1)^2$
non-overbidding strategies for $T_{min}$. These numbers are
drastically smaller compared to the number of pure strategies, which
is exponential in the size of the extensive form (and doubly
exponential in the size of the input description).

\begin{lemma}\label{lemma:nonoverbidding}
  Maxmin value over non-overbidding strategies can be computed in time
  $|H| \cdot (n + 1)^4$.
\end{lemma}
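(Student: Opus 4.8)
The plan is to use the defining feature of non-overbidding strategies: each player bids only at her first turn and passes forever after, so an entire play is determined by the four first-turn bids together with the secret profile $h=(h_N,h_E,h_S,h_W)$. First I would observe that, given $h$ and a tuple of first bids, the strictly-increasing rule forces the last non-zero bid (in the fixed order $N,E,S,W$) to be the declarer's contract $k$, and the $T_{max}$-payoff is then obtained in $O(1)$ from $\Theta$ and $k$ using the model's rule (gain $k$ if $\Theta_i(h)\ge k$ and a loss otherwise, with the sign flipped when the declarer is on team $T_{min}$). Thus the whole game is summarised by a table of $T_{max}$-payoffs indexed by a secret profile in $H$ and a valid bid tuple in $\{0,\dots,n\}^4$, of size $|H|\,(n+1)^4$, each cell computable in constant time.

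Next I would compute the maxmin by a nested optimisation that follows the play order and exploits that every bid is \emph{public}. The two teams' non-overbidding strategy spaces have sizes $|H_N||H_S|(n+1)^2$ and $|H_E||H_W|(n+1)^2$, whose product is exactly $|H|\,(n+1)^4$. Since $W$ moves last and observes $b_N,b_E,b_S$, at each information set $(h_W,b_N,b_E,b_S)$ her minimising bid can be selected independently, by comparing the precomputed payoffs weighted by the reaching probabilities $\Delta(\cdot)$; fixing this best response, $E$'s response is optimised the same way at each $(h_E,b_N)$, and analogously the $T_{max}$ choices of $S$ and then $N$ are resolved. Organised as a single sweep over the table, the public coordinates contribute the $(n+1)^4$ factor and the secret profiles contribute the $|H|$ factor, which is where the bound $|H|\,(n+1)^4$ comes from.

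The delicate point --- and the step I expect to be the main obstacle --- is justifying that this local, information-set-by-information-set optimisation really computes the \emph{team} minimum (and symmetrically the team maximum), even though $E$ and $W$ cannot communicate and are ignorant of the opponents' secrets. The argument I would give rests on two facts specific to non-overbidding play: each player acts exactly once, and all bids are observed, so the public bid history is a sufficient statistic and the best response provably decomposes along the public tree while the secrets enter only as additive $\Delta$-weights. Verifying this decomposition, and checking that the weighted minimisation at $E$'s nodes is consistent with $W$'s already-fixed response, is the part that needs care; once it is in place, the stated running time follows from the single pass over the $|H|\,(n+1)^4$ table.
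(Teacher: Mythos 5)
Your first step is sound: under non-overbidding play, a play is completely determined by the secret profile and the four first-turn bids, so one can tabulate the $T_{max}$-payoff for every pair (secret profile, valid bid tuple) in time $|H|\cdot(n+1)^4$, which is exactly where the stated bound comes from. For what it is worth, the paper states Lemma~\ref{lemma:nonoverbidding} without an explicit proof; the implied argument is the strategy count given just before it, and your table is the honest version of that count (note that $|H_S|\cdot(n+1)$ counts pairs of a secret and a bid, not strategy functions --- the number of non-overbidding strategies of $S$, which may condition her first bid on $(b_N,b_E)$, is $(n+1)^{|H_S|\cdot(n+1)^2}$ --- so brute-force enumeration of strategy profiles does not by itself give the bound).

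The genuine gap is the step you flag as delicate, and it is not merely unverified: the claimed decomposition fails in general. Your sweep fixes $W$'s minimising bid at each information set $(h_W,b_N,b_E,b_S)$ \emph{before} $E$'s strategy is known, but $W$'s best response there is a minimisation of the payoff conditioned on reaching that set, and the conditional distribution over the hidden secrets $(h_N,h_E,h_S)$ depends on which secrets $h_E$ are mapped by $\sigma_E$ to the observed bid $b_E$ (and similarly on $\sigma_N,\sigma_S$). Conversely, $E$'s optimal bid at $(h_E,b_N)$ depends on the response $W$ will later play, and that single response is shared by \emph{all} secrets $h'_E$ that pool on the same bid $b_E$. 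Hence the choices of $E$ at different information sets are coupled through $W$'s pooled responses: the inner minimisation over $(\sigma_E,\sigma_W)$ is a decentralised team problem that does not split into independent per-information-set minimisations, and the public bid history is not a sufficient statistic precisely because teammates hold private, uncommunicated secrets. The same coupling afflicts the outer maximisation over $(\sigma_N,\sigma_S)$, and in addition a single backward sweep in reverse play order computes an equilibrium-like value of the public tree rather than the committed order $\max_{\sigma_N,\sigma_S}\min_{\sigma_E,\sigma_W}$. To close the gap you would need an argument specific to this payoff structure showing that pooling is never strictly profitable for the responding team (so that best responses do decompose along the public tree), or a genuinely different algorithm; as written, the sweep need not return the maxmin value over non-overbidding strategies.
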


Of course, non-overbidding strategies will not be in general the same
as maxmin over pure. In particular, for the example of
Table~\ref{tab:bridge-game-example} the strategy $\sigma_N$ mentioned
before is over-bidding. It turns out that in some cases, considering
non-overbidding strategies is sufficient. Consider the game given in
Figure~\ref{tab:bridge-game-example-2}. The only player to receive a
private signal is $N$. All others have a publicly known trivial
signal $\bot$. Player $N$ can receive one of $6$ secrets
$h_1, \dots, h_6$. In this case $N$ bids $3,4,5$ from $h_1,h_2,h_3$
making the optimal contract in her first turn. From $h_4,h_5,h_6$ she
bids $0,1,2$ in her first turn and $S$ gaining complete information
about secret of $N$ due to her distinct actions, bids $3,4,5$
respectively if $E$ has not already made those bids. Here
non-overbidding strategies are sufficient to obtain maxmin expected
payoff.

We have exhibited a class of strategies that can be efficiently
computed and which are sufficient for some games. We leave the more
general question of checking how close the value computed by
non-overbidding strategies is to the actual maxmin as part of future
work.

\bibliography{main} 

\end{document}